\newtheorem{theorem}{Theorem}[section]
\newtheorem{corollary}[theorem]{Corollary}
\newtheorem{lemma}[theorem]{Lemma}
\newtheorem{definition}[theorem]{Definition}
\title{Complex-Demand Knapsack Problems and \\Incentives in AC Power Systems}
\author{
\alignauthor
Lan Yu\\
			 \affaddr{Division of Mathematical Sciences}\\
			 \affaddr{School of Physical and Mathematical Sciences}\\
       \affaddr{Nanyang Technological University, Singapore}\\
       \email{yula0001@ntu.edu.sg}
\alignauthor Chi-Kin Chau\\
			 \affaddr{Computing and Information Science}\\
       \affaddr{Masdar Institute of Science and Technology}\\
       \affaddr{Abu Dhabi,UAE}\\
       \email{ckchau@masdar.ac.ae}
}
\begin{document}

\maketitle


\begin{abstract}
We consider AC electrical systems where each electrical device has a power demand expressed as a complex number, and there is a limit on the magnitude of total power supply.  Motivated by this scenario, we introduce the {\em complex-demand knapsack problem} ({\sc C-KP}), a new variation of the traditional knapsack problem, where each item is associated with a demand as a complex number, rather than a real number often interpreted as weight or size of the item.  While keeping the same goal as to maximize the sum of values of the selected items, we put the capacity limit on the magnitude of the sum of satisfied demands.

For {\sc C-KP}, we prove its inapproximability by FPTAS (unless P = NP), as well as presenting a $(1/2-\epsilon)$-approximation algorithm.  Furthermore, we investigate the selfish multi-agent setting where each agent is in charge of one item, and an agent may misreport the demand and value of his item for his own interest.  We show a simple way to adapt our approximation algorithm to be monotone, which is sufficient for the existence of {\em incentive compatible} payments such that no agent has an incentive to misreport.  Our results shed insight on the design of multi-agent systems for smart grid.
\end{abstract}

\category{I.2.11}{Distributed Artificial Intelligence}{Multiagent
  Systems}
\category{F.2}{Theory of Computation}{Analysis of Algorithms and Problem Complexity}
\terms{Algorithms, Theory}
\keywords{knapsack problem, approximation algorithm, FPTAS, incentive compatibility, truthfulness, AC electrical system, smart grid}


\section{Introduction}\label{sec:introduction}
\noindent
Most studies of power allocation only consider devices without minimum power requirements; we focus on those with such requirements, such as electric vehicles (EVs) charging, which will not produce any value unless it is charged enough to travel a threshold distance.  Gerding et al. \cite{gerding2011online} studies online electric vehicle charging by expressing power demands as real numbers.  However, in alternating current (AC) electrical systems, alternating power is provided.  In this paper, we study electrical devices with a power demand expressed as a complex number $d = d^{\rm R} + {\bf i} d^{\rm I}$.  Although $d^{\rm I}=0$ for purely resistive appliances; devices with capacitive or inductive components have non-zero imaginary part $d^{\rm I}$ \cite{GS94power}.  

In power allocation, due to the constraint of power generation, there is a limit $C$ on the magnitude of the total power supply, i.e., the magnitude of the sum of satisfied demands should not exceed $C$.  Since only a limited number of devices can be served and different devices produce different values when they receive enough power to work,   
there arises a natural allocation problem: 
we want to select a subset of devices to provide power subject to the power limit constraint such that the total value produced is maximized.

Moreover, in a multi-agent setting (e.g., in future smart grid, where intelligent devices are automatically controlled by agents), the demand and value of each device are private knowledge of an individual agent.  The power allocation algorithm collects the input information from each agent, and based on that, computes which subset of demands to satisfy.  Depending on the (publicly known) algorithm, each selfish agent may misreport his demand or value to the algorithm in order to get selected.  

Naturally, to guarantee a good realization of our optimization goal (here, maximizing {\em social welfare}, the total value of selected items), we would like to design the algorithm in a way that incentivizes all agents to report their true information.  
This falls into the study of {\em Algorithmic mechanism design} \cite{N07book,NR01alg}, a burgeoning research area that deals with designing algorithms (called {\em mechanisms} here) for settings where inputs are controlled by selfish agents.  
Each agent is modeled to strategize so as to maximize his {\em utility}, a quantity that indicates his overall benefit.
A mechanism is {\em incentive compatible}, or simply {\em truthful}, if no agent has an incentive to misreport.
A general approach in mechanism design is to enforce payment on each agent to adjust his utility so that truth-telling always maximizes his utility.   

Now formally, we have the following mechanism design problem: we have a set $K$ of distinct agents where each agent $k\in K$ owns an item with a positive value $v_k$ and a complex-valued demand $d_k=d^{\rm R}_k + {\bf i} d^{\rm I}_k$.  Given capacity $C>0$, our task is to choose a subset $S\subseteq K$ of agents to satisfy their demands and assign each agent $k$ a nonnegative payment $p_k$.  The goal is to elicit true inputs and maximize the total value of selected items subject to the constraint that $|\sum_{k\in S} d_k|\leq C$.       	
Here we limit our attention to the case where $d^{\rm R}_k, d^{\rm I}_k\geq 0$ for all $k$. This assumption is reasonable, since, although demands do not necessarily lie in the first quadrant of the complex plane, they are recommended\footnote{
NEC NFPA 70-2005 (a standard for electrical systems and appliances) suggests that high-consumption appliances should conform to restricted power factor, which implies $d^{\rm R}_k \geq |d^{\rm I}_k|$.} to stay within the region $d^{\rm R} \geq |d^{\rm I}|$, which can be obtained by rotating the first quadrant by $\pi/4$.
 
Usually, the design of a truthful mechanism is composed of two steps: first, we solve the pure algorithmic problem; second, we identify certain condition that guarantees the existence of incentive compatible payments and make it satisfied by our algorithm.  

We follow this path for our problem.  Our algorithmic problem is a winner determination optimization problem; we call it the {\em complex-demand knapsack problem} ({\sc C-KP}), as it turns out to be an interesting new variation of the traditional knapsack problem \cite{KPP10book}.  In the original {\em one-dimensional knapsack problem} ({\sc 1-KP}), the demand of an item is simply a nonnegative real number, often interpreted as the weight or size of the item.  The "knapsack", with fixed real-valued capacity to hold the items, represents the limited resource.  The multi-dimensional generalization, the {\em $m$-dimensional knapsack problem} ({\sc $m$-KP}), captures the settings where there are $m$ independent resource constraints on the $m$ dimensions (independent features) of the demands.  1-KP can model the power allocation in direct current (DC) electrical systems, where power demands can be expressed as real numbers, but fails for AC systems, where each demand is a two-dimensional vector.  Our problem is also different from {\sc 2-KP} since our capacity constraint is a quadratic one (on the magnitude of the total satisfiable demand), rather than two independent linear constraints in {\sc 2-KP}.  Moreover, it is natural to modify our problem by including these two linear constraints (on the real and imaginary part of the total satisfiable demand respectively), and thus introduce the {\em generalized complex-demand knapsack problem} ({\sc GC-KP}).  In fact, many power generators do have all three constraints of GC-KP.     

It is well-known that 1-KP is NP-hard, and our complex-demand variations include it as a special case when we set all $d^{\rm I}_k=0$.  
Hence we are interested in good polynomial-time approximation algorithms.  In this work, we present an algorithm with constant $1/2-\epsilon$ approximation ratio for both {\sc C-KP} and {\sc GC-KP}, and show the inapproximability of {\sc C-KP} by FPTAS (unless P = NP), based on its connections to well-studied {\sc 1-KP}, {\sc 2-KP} and {\sc 3-KP}.  
There is still a gap to close, and we conjecture that {\sc C-KP} admits a PTAS.


As to the incentive part, the difficulty lies in the following: VCG mechanisms \cite{N07book} are both social welfare maximizing and truthful; however, they become computationally infeasible when computing optimal social welfare is computationally hard, as in our setting.  Worse still, using algorithms approximating maximum social welfare may not preserve truthfulness.  
To obtain truthful and efficient mechanisms with a good approximation ratio, a leading approach is through "monotonization": First prove that a certain notion of monotonicity suffices for the existence of incentive compatible payments and then design or adapt an existing algorithm to be monotone.  This has been successfully applied to problem settings with {\em single-parameter} \cite{archer2001truthful} and {\em single-minded} agents \cite{LOS99mono}, with efficiently computable payments specified; in fact, for the former, monotonicity is necessary as well, which justifies the necessity of monotonization.  An additional nice property of the specified payments is that they guarantee nonnegative utilities for all agents, which, in mechanism design, is an important desired property called {\em individual rationality} ensuring voluntary participation of the agents. 

For the knapsack problem, if both demand and value of an item are private information, which is the case we investigate here, we do not have single-parameter agents. However, all variations we consider are special cases of single-minded agents, each has a single object $d_k$ in mind, gets value $v_k$ if he is assigned an object no worse than $d_k$ and 0 otherwise.  For example, in our power system setting, the power demand $d_k$ is the single object the $k$th agent desires, and the value $v_k$ is produced as long as the power he receives is $\geq d_k$ (according to comparisons between complex numbers).  The monotonicity property for single-minded agents looks natural and reasonable: If an agent is selected with certain demand and value, he should remain selected with a lower demand and a higher value, while the inputs of other agents are fixed.  Although this property easily holds for exact optimization, 
it may not hold for approximation algorithms.   
For {\sc C-KP}, we succeed in monotonizing our constant approximation algorithm, based on an existing monotone FPTAS for {\sc 1-KP} in \cite{BKV05KS}, and thus achieving incentive compatibility.


\smallskip

\noindent{\bf Related Work\ }
The knapsack problem has many variations with respect to divisibility of items, copies of items, dimensions of constraints, etc \cite{KPP10book}.
In this work, we restrict our attention to the NP-hard {\em one-dimensional knapsack problem} ({\sc 1-KP}) where each indivisible item has only one single copy, and its multi-dimensional generalization, the {\em $m$-dimensional knapsack problem} ({\sc $m$-KP}).

For {\sc 1-KP}, there is a pseudo-polynomial time algorithm using dynamic programming achieving exact optimization when all item values are integers.  There is a simple {\em fully polynomial-time approximation scheme} (FPTAS), which scales and rounds the item values and then applies the pseudo-polynomial time algorithm on small integer values \cite{KPP10book}.  However, this FPTAS is not monotone, since the scale factor involves the maximum item value.  Briest et al. \cite{BKV05KS} monotonized it, by performing the same procedure with a series of different scaling factors irrelevant to item values and taking the best solution out of them.  Hence {\sc 1-KP} admits an incentive compatible FPTAS.       

As to {\sc $m$-KP} with $m\geq 2$, there is a {\em polynomial-time approximation scheme} (PTAS) by Frieze and Clarke \cite{FC84alg} based on the integer programming formulation, but it is not evident to see whether it is monotone.  On the other hand, {\sc 2-KP} is already inapproximable by FPTAS unless P = NP, by a reduction from {\sc equipartition} \cite{KPP10book}.  In fact, there is no {\em efficient polynomial-time approximation scheme} (EPTAS) for {\sc 2-KP} unless W[1] = FPT (See \cite{kulik2010there}).
 
\smallskip

\noindent{\bf Our Results\ }
We initiate the study of the {\em complex-demand knapsack problem} ({\sc C-KP}) and its hybrid with {\sc 2-KP}, the {\em generalized complex-demand knapsack problem} ({\sc GC-KP}).

In Section \ref{sec:algCKP}, we present an approximation algorithm for C-KP, which projects all demand vectors onto the $\pi/4$ line and uses an approximation algorithm for 1-KP as a subroutine.  Since 1-KP admits an FPTAS, we achieve approximation ratio $1/2-\epsilon$ for any $\epsilon>0$, with running time polynomial in $1/\epsilon$ and the size of the input.  
Moreover, the 
algorithm can be monotonized, 
as shown in Section \ref{sec:monalg}, due to the existence of the monotone FPTAS for 1-KP.  

On the other hand, in Section \ref{sec:inappr}, we complete our study of C-KP by providing an inapproximability result.  We prove that there is no FPTAS for C-KP unless P = NP, through a modification of the reduction from {\sc equipartition} for 2-KP.  

Finally, for GC-KP, the inapproximability result is inherited since it includes C-KP as a special case.  We also come up with an approximation algorithm by applying the same idea as for C-KP, but we have to use a PTAS for 3-KP as a subroutine (Section \ref{sec:algGCKP}). Again we achieve approximation ratio $1/2-\epsilon$ for any $\epsilon>0$, but the running time is only guaranteed to be polynomial in the size of the input.   
Regarding monotonization, a similar trick as in Section \ref{sec:monalg} would work for {\sc GC-KP}, if we could find a good monotone approximation algorithm for {\sc 3-KP}.


\section{Preliminaries}\label{sec:prelim}
\noindent
\subsection{The Knapsack Problems}
\noindent
Here we give the integer programming formulation of the knapsack problems discussed in this paper.  The decision of an allocation algorithm is specified by indicator variables $x_k\in \{0,1\}$ for item $k\in K$, which has a simple correspondence to the selected subset of items: $S= \{k\in K: x_k=1\}$.  We will switch back to the subset representation in later sections for convenience of illustration. 

The {\em one-dimensional knapsack problem} ({\sc 1-KP}) is defined as: 
\begin{equation*}
\begin{array}{@{}lc} \hspace{-50pt}
\mbox{({\sc 1-KP})} & \displaystyle \qquad \max \sum_{k \in K} x_k v_k 
\end{array}
\end{equation*}
subject to 
\begin{equation*}
\sum_{k \in K} x_k d_k \le C 
\end{equation*}
where 
\begin{itemize}

\item $K$ is a set of items;

\item $v_k$ is the positive value of item $k$ if its demand is satisfied;

\item $d_k$ is the nonnegative real-valued demand of item $k$;

\item $C$ is the positive real-valued capacity on the total satisfiable demand;

\item $x_k$ indicates whether item $k$ is selected: $x_k = 1$ means that the demand of item $k$ is satisfied, and 0 otherwise.

\end{itemize}

1-KP can be generalized to multi-dimensions.  The {\em $m$-dimensional knapsack problem} ({\sc $m$-KP}) is defined as: 
\begin{equation*}
\begin{array}{@{}lc} \hspace{-50pt}
\mbox{({\sc $m$-KP})} & \displaystyle \qquad \max \sum_{k \in K} x_k v_k 
\end{array}
\end{equation*}
subject to $m$ independent inequalities
\begin{equation*}
\sum_{k \in K} x_k d^j_k \le C^j 
\end{equation*}
for $j=1,\ldots, m$, where 
\begin{itemize}

\item $d^j_k$ is the nonnegative real-valued demand of item $k$ in dimension $j$;

\item $C^j$ is the positive real-valued capacity on the total satisfiable demand in dimension $j$.

\end{itemize}
Each $m$-KP is a linear integer program, and $m$-KP is a special case of $(m+1)$-KP for all $m$.  We are especially interested in 1-KP, 2-KP and 3-KP, whose previous results will be used to achieve ours.  In particular, the {\em two-dimensional knapsack problem} ({\sc 2-KP}) can also be formulated in terms of complex-valued demands:


\begin{equation*}
\begin{array}{@{}lc} \hspace{-50pt}
\mbox{({\sc 2-KP})} & \displaystyle \qquad \max \sum_{k \in K} x_k v_k 
\end{array}
\end{equation*}
subject to 
\begin{equation*}
\sum_{k \in K} x_k d_k^{\rm R} \le C^{\rm R} \mbox{\ and\ } \sum_{k \in K} x_k d_k^{\rm I} \le C^{\rm I} 
\end{equation*}
where 
\begin{itemize}

\item $d^{\rm R}_k, d^{\rm I}_k$ are the nonnegative real part and imaginary part respectively of the complex-valued demand $d_k$ of item $k$;

\item $C^{\rm R}, C^{\rm I}$ are the positive real-valued capacities on the real part and imaginary part respectively of the total satisfiable demand.
\end{itemize}


Our study concerns the capacity constraint on the magnitude of the total satisfiable demand, which is no longer linear.  We formulate the {\em complex-demand knapsack problem} ({\sc C-KP}) as follows: 
\begin{equation*}
\begin{array}{@{}lc} \hspace{-50pt}
\mbox{({\sc C-KP})} & \displaystyle \qquad \max \sum_{k \in K} x_k v_k 
\end{array}
\end{equation*}
subject to  
\begin{equation*}
\Big|\sum_{k \in K} x_k d_k \Big| \le C 
\end{equation*}
where
\begin{itemize}

\item $d_k = d_k^{\rm R} + {\bf i} d_k^{\rm I}$ is the complex-valued demand of item $k$ where $d^{\rm R}_k, d^{\rm I}_k$ are both nonnegative;

\item $C$ is the positive real-valued capacity on the magnitude of the total satisfiable demand. 
\end{itemize}

Combining the constraints of C-KP and 2-KP results in the following {\em generalized complex-valued knapsack problem} ({\sc GC-KP}): 
\begin{equation*}
\begin{array}{@{}lc} \hspace{-50pt}
\mbox{({\sc GC-KP})} & \displaystyle \qquad \max \sum_{k \in K} x_k v_k 
\end{array}
\end{equation*}
subject to  
\begin{equation*}
\Big|\sum_{k \in K} x_k d_k \Big| \le C \mbox{\ and\ } \sum_{k \in K} x_k d_k^{\rm R} \le C^{\rm R} \mbox{\ and\ } \sum_{k \in K} x_k d_k^{\rm I} \le C^{\rm I}.  
\end{equation*}

\subsection{Approximation Algorithm}
\noindent
For knapsack problems, given a solution represented by the selected subset of items $S\subseteq K$, we denote the total value of selected items by $v(S)=\sum_{k\in S}v_k$.  Let $S^\ast$ denote an optimal solution.

For our value maximization objective, an algorithm is called a {\em $\rho$-approximation}, if on each input, the output $S$ of the algorithm satisfies $v(S) \ge \rho \cdot v(S^\ast)$.   
Since the knapsack problems considered in this paper are NP-hard, one looks for polynomial-time algorithms with good approximation ratio $\rho$. 

It is desirable to find constant approximation algorithms with $\rho$ as close to 1 as possible; stronger than that are algorithms whose approximation ratio can be arbitrarily close to 1: 

One such candidate is a {\em polynomial-time approximation scheme} (PTAS), which is a $(1-\epsilon)$-approximation algorithm for any $\epsilon>0$.  The running time of a PTAS is polynomial in the input size for every fixed $\epsilon$, but the exponent of the polynomial might depend on $1/\epsilon$.  One way of addressing this is to define the {\em efficient polynomial-time approximation scheme} (EPTAS), whose running time is the multiplication of a function in $1/\epsilon$ and a polynomial in the input size independent of $\epsilon$.
An even stronger notion is a {\em fully polynomial-time approximation scheme} (FPTAS), which requires the running time to be polynomial in both the input size and $1/\epsilon$.  

In this work, we design constant $1/2-\epsilon$ approximation algorithms for C-KP and GC-KP based on the FPTAS for 1-KP and PTAS for 3-KP respectively. 
\subsection{Incentive Compatibility}\label{subsec:ic}
\noindent
In this subsection, we give a formal model of mechanism design with single-minded agents based on our C-KP problem setting, state the monotonicity condition, and specify the incentive compatible payments under it.  Single-minded agents are first introduced by Lehmann et al. \cite{LOS99mono}, and here we essentially present the model described in \cite{BKV05KS}.  Readers can refer to \cite{N07book, NR01alg} for a formal definition of the general setting of mechanism design.

We are given a set $K$ of agents, where agent $k$ controls item $k$.  The demand and value of item $k$ is agent $k$'s private information, which is called his {\em type}, denoted by $t_k=(d_k, v_k)$.   Each agent $k$ is single-minded: he has the single demand $d_k$ in mind, and enjoys value $v_k$ if and only if his demand is satisfied.  

Here, with selfish behaviors, satisfying the demand of an agent is no longer the same as selecting an agent, since an agent may get selected by reporting a lower demand, but the assignment he receives is only guaranteed to cover his reported demand, which may not be enough for his true demand.  Therefore, we need to modify an outcome $o$ of an allocation algorithm from the indicator variable $x_k\in \{0,1\}$ for each agent $k$ to a specific assignment $o_k$ agent $k$ receives (clearly $o_k=0$ when $x_k=0$).  Let $\mathbb{C}_+$ denote all complex numbers in the first quadrant of the complex plane, we have $o_k\in \mathbb{C}_+$ and $o\in \mathbb{C}_+^{K}$.  

Now we are able to represent the value agent $k$ derives from an outcome $o$ by his {\em valuation function}: $t_k(o)=v_k$ if $d_k\leq o_k$ and 0 otherwise.  Conventionally we abuse the notation and use $t_k:\mathbb{C}_+^{K}\rightarrow \mathbb{R}$ to denote the valuation function associated with type $t_k$.  The comparison $d_k\leq o_k$ interprets the condition that the assignment meets the demand.
For C-KP, it conforms to the partial order between complex numbers: $z_1\leq z_2$ iff $z_1^{\rm R}\leq z_2^{\rm R}$ and $z_1^{\rm I}\leq z_2^{\rm I}$.  It can also be generalized to settings where the outcome set admits a partial order and a minimum element.  As required in the general model of mechanism design, our valuation function only depends on the outcomes, which also justifies the necessity to change our representation of outcomes.  

For ease of notation, we let $t$ denote an input, a list of all agents' types $((d_k,v_k):k\in K)$ and denote the input except that of agent $k$ by $t_{-k}$.  Clearly $t=(t_k,t_{-k})$.\footnote{Unless specified as the true type, $t_k$ may denote any reported type.}

A {\em mechanism} $M=({\cal A},p)$ consists of an allocation algorithm $\cal A$ computing an allocation solution ${\cal A}(t)\in \mathbb{C}_+^{K}$ for each input $t$ and a $|K|$-tuple $p(t)$ for each $t$ where $p_k(t)\in \mathbb{R}$ is the payment enforced on agent $k$.  If $d_k\leq {\cal A}(t)_k$, we say that agent $k$ is selected, i.e., he receives an assignment that meets his input demand.  We represent the set of selected agents as $S(A(t))$. 
Given the mechanism, the {\em utility}, the overall benefit of agent $k$, when his true type is $t_k$, equals his valuation minus the payment: $u_k(t)= t_k({\cal A}(t))-p_k(t)$.

As mentioned in Section \ref{sec:introduction}, given the mechanism, each agent may not report his true type for his own benefit.  Suppose agent $k$ has true type $t_k=(d_k, v_k)$ and reports $t_k'=(d_k', v_k')$.  Here the outcome of the algorithm $\cal A$ is ${\cal A}(t_k',t_{-k})$, but his valuation function remains $t_k$, so he obtains valuation $t_k({\cal A}(t_k',t_{-k}))$, and his utility is $u_k(t_k',t_{-k})= t_k({\cal A}(t_k',t_{-k}))-p_k(t_k',t_{-k})$.  On the other hand, if he reports his true type, his utility is $u_k(t_k,t_{-k})= t_k({\cal A}(t_k,t_{-k}))-p_k(t_k,t_{-k})$.  Each selfish agent intends to maximize his utility, so he will choose to misreport $t_k'$ if it results in higher utility, assuming other agents do not change their input, i.e., $u_k(t_k',t_{-k})> u_k(t_k,t_{-k})$.  Therefore, a mechanism is {\em incentive compatible}, or {\em truthful}, if and only if this can not happen, which is equivalent to saying that, for any agent $k$, any $t_{-k}$ and any true type $t_k$, truth-telling maximizes agent $k$'s utility, i.e., $u_k(t_k,t_{-k})\geq u_k(t_k',t_{-k})$ for any possible $t_k'$.    

A sufficient condition to ensure truthfulness for single-minded agents is {\em monotonicity}, specified as follows in our setting:

\begin{definition}\label{def:mon}
An allocation algorithm ${\cal A}$ is {\em monotone} if $k\in S({\cal A}(t_k,t_{-k}))$ implies $k\in S({\cal A}(t_k',t_{-k}))$ for any $t_k=(d_k,v_k)$ and $t_k'=(d_k',v_k')$ with $v'_k \ge v_k$, ${d}_k' \le d_k$. 
\end{definition}
Intuitively, in a monotone algorithm, if agent $k$ is selected with demand $d_k$ and value $v_k$, he should be also selected when he has smaller demand ${d}_k'$ and larger value $v'_k$.\footnote{Note that in this definition, the specific assignments $o_k$ are irrelevant, so in Section \ref{sec:monalg}, we can stay with our original problem formulation when we argue about the monotonicity of our algorithm.}  
The following theorem states the sufficiency of monotonicity \cite{BKV05KS,LOS99mono}:
\begin{theorem}
\label{thm:IC}
Let ${\cal A}$ be a monotone and exact algorithm for single-minded agents.  Then there exists payment $p^{\cal A}$ such that ${\cal M}_{\cal A}=({\cal A},p^{\cal A})$ is incentive compatible.
\end{theorem}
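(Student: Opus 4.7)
The plan is to follow the standard critical-value payment construction of Lehmann, O'Callaghan, and Shoham. For each agent $k$, each profile $t_{-k}$ of others' reports, and each fixed demand $d_k$, monotonicity of $\mathcal{A}$ implies that the set of reported values $v$ for which $k \in S(\mathcal{A}((d_k,v),t_{-k}))$ is upward closed, yielding a well-defined critical threshold $\theta_k(d_k,t_{-k}) \in \mathbb{R}_{\geq 0} \cup \{+\infty\}$, with a fixed tie-breaking convention at $v=\theta_k$. I would then set $p^{\mathcal{A}}_k(t) = \theta_k(d_k,t_{-k})$ whenever $k \in S(\mathcal{A}(t))$ and $p^{\mathcal{A}}_k(t) = 0$ otherwise; crucially, this payment depends on $k$'s reported demand but not on his reported value. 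A preliminary monotonicity lemma I would record: if $d_k \leq d_k'$ in the complex partial order, then $\theta_k(d_k,t_{-k}) \leq \theta_k(d_k',t_{-k})$, since any $v$ making $(d_k',v)$ selected also makes $(d_k,v)$ selected by Definition \ref{def:mon}.

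To verify incentive compatibility, I would fix agent $k$ with true type $t_k=(d_k,v_k)$ and an arbitrary misreport $t_k'=(d_k',v_k')$. Suppose first that truthful reporting gets $k$ selected, so the truthful utility is $v_k-\theta_k(d_k,t_{-k}) \ge 0$. I then split the misreport into three cases: (i) misreport unselected, utility $0$; (ii) misreport selected but $d_k' \not\ge d_k$, so the assignment $o_k=d_k'$ fails to cover the true demand and the valuation vanishes, giving utility $\le 0$; (iii) misreport selected with $d_k' \ge d_k$, where the preliminary lemma yields $\theta_k(d_k',t_{-k}) \ge \theta_k(d_k,t_{-k})$, so the misreport utility $v_k-\theta_k(d_k',t_{-k})$ is no larger than the truthful one. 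A symmetric analysis handles the case where truthful reporting leads to non-selection: truthful utility is $0$, and each misreport yielding selection produces either zero valuation (case $d_k' \not\ge d_k$) or a payment exceeding $v_k$ (case $d_k' \ge d_k$, combined with the threshold characterization of non-selection).

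The main subtlety I anticipate is the interplay between the two-dimensional partial order on complex demands and the inherently one-dimensional critical-value argument: monotonicity gives a clean threshold only when $d_k$ is held fixed, so the proof must carefully distinguish misreports whose demand dominates the true one from those that do not. In the non-dominating case, the argument avoids invoking monotonicity at all and instead relies on the single-minded structure of the valuation function. A minor technical care is needed in the tie-breaking convention at $v = \theta_k$ so that Definition \ref{def:mon} and the selection rule of $\mathcal{A}$ remain consistent; either the strict or the weak convention works as long as it is fixed throughout, and the resulting mechanism additionally satisfies individual rationality since $v_k \ge \theta_k(d_k,t_{-k})$ whenever $k$ is selected.
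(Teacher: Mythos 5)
Your proposal is correct and is essentially the argument the paper relies on: the paper states this theorem with a citation to Lehmann et al.\ and Briest et al.\ and then specifies exactly the critical-value payment $p_k^{\cal A}(t)=\theta_k(d_k,t_{-k})$ for selected agents that you construct, with exactness handling the non-dominating misreports and the threshold monotonicity in $d_k$ handling the dominating ones. No gaps; your case analysis is the standard proof of this sufficiency result.
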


We call a mechanism {\em exact} if for all inputs $t=((d_k,v_k):k\in K)$ and all agents $k$, $A(t)_k$ is either $d_k$ or $0$, i.e., either the exact demand is satisfied or nothing is assigned.  Without exactness, an agent may benefit from underreporting his demand.  It is not difficult to see that we can always modify a truthful mechanism to be exact.  After all, exactness is a reasonable assumption since it is undesirable to waste resource in our allocation.     

The incentive compatible payment $p^{\cal A}$ is specified as follows:
Given a monotone algorithm ${\cal A}$, if we fix $d_k$ and $t_{-k}$ for agent $k$, then ${\cal A}$ defines a {\em critical value} $\theta_k(d_k, t_{-k})$, such that when $v_k$ is above the critical value, $k$ is selected; and when $v_k$ is below the critical value, $k$ is not selected.  Then we can define a payment function $p^{\cal A}(t)$, where each selected agent pays the critical value: 
\begin{equation*} 
p_k^{\cal A}(t)= 
\left\{
\begin{array}{ll}
\theta_k(d_k,t_{-k}) & \mbox{if agent $k$ is selected\ } \\
0 & \mbox{otherwise\ }  \\
\end{array} 
\right.  
\end{equation*} 

By Theorem~\ref{thm:IC}, if we are able to design a monotone algorithm, we can transform it into a truthful mechanism.  Moreover, the critical value for a given input can be computed in polynomial time by a binary search on interval $[0, v_k]$ for each agent $k$ during which we repeatedly test if $k$ is satisfied by running algorithm ${\cal A}$. Therefore, a monotone polynomial time allocation algorithm ${\cal A}$ implies a polynomial time truthful mechanism.

In addition, the payment function $p^{\cal A}(t)$ guarantees that all agents receive nonnegative utilities.  This property, called {\em individual rationality}, ensures voluntary participation of the agents, thus is also an important desired property in mechanism design.   
 
Therefore, the monotone polynomial time algorithm for C-KP we will present in Section \ref{sec:monalg} implies a polynomial time mechanism that is both individually rational and incentive compatible.  

We need to point out that the mechanism requires the item values $\{v_k\}$ to be integers, because of the binary search needed in the payment computation.  This is a reasonable assumption, since values are usually rounded up to the nearest cent or dollar.  The approximation algorithm in Section \ref{sec:algCKP} does not need this assumption, since the FPTAS for 1-KP rounds the item values.

\section{Approximation Algorithm for \\C-KP}\label{sec:algCKP}
\noindent
We present a polynomial-time $(\frac{1}{2}-\epsilon)$-approximation algorithm for {\sc C-KP}, which relies on a polynomial-time approximation algorithm for {\sc 1-KP} as a subroutine.   
\subsection{Basic Idea} \label{subsec:pic}
\noindent
Graphically, each demand $d_k=d^{\rm R}_k+ {\bf i}d^{\rm I}_k$ of item $k$ is a vector in the first quadrant.  A feasible solution of our problem is a subset of items whose sum of demands lies in region ${\cal D}$, the $1/4$ disk of radius $C$ in the first quadrant. 
As shown in Fig.~\ref{fig:fig1}, ${\cal D}$ is divided by chord $PQ$ into a closed triangle ${\cal D}_1$ and a circular segment ${\cal D}_2={\cal D}-{\cal D}_1$.  The $\frac{\pi}{4}$ line intersects chord $PQ$ at point $R$.  Since we may preprocess the demands and eliminate those whose magnitude exceeds capacity $C$, without loss of generality, we assume all $|d_k|\leq C$.  

\begin{figure}[htb!]
 \centering 
 \includegraphics[scale=0.5]{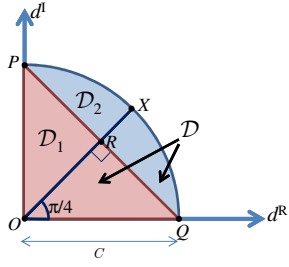} 
 \caption{Graphical picture for C-KP.} 
\label{fig:fig1}
\end{figure}


If we project all demands onto the $\frac{\pi}{4}$ line, i.e., 
\begin{equation*}
\tilde{d}_k \triangleq (d_k^{\rm R}+d_k^{\rm I})/\sqrt{2},
\end{equation*}
we make all demands one-dimensional.  Now a subset of demands has sum $\sum \tilde{d}_k \le  C/\sqrt{2}$ (i.e., the sum vector does not go beyond point $R$ on the $\frac{\pi}{4}$ line) if and only if its original sum vector $\sum d_k$ lies inside the triangle ${\cal D}_1$.  This is because that, the sum of projections, $\sum \tilde{d}_k$, is the projection of $\sum d_k$ on the $\frac{\pi}{4}$ line.  Therefore, the subproblem on feasible region ${\cal D}_1$ can be solved by an approximation algorithm for {\sc 1-KP} with demands changed to $\tilde{d}_k$ and capacity to $C/\sqrt{2}$.

On the other hand, the subproblem on feasible region ${\cal D}_1$ is almost the whole story: First, evidently an optimal solution in ${\cal D}$ can contain at most one demand in ${\cal D}_2$; second, if an optimal solution consists of more than one demand, its sum can be broken into either two separate subsums lying in ${\cal D}_1$, or, the sum of a vector in ${\cal D}_2$ and a subsum in ${\cal D}_1$.  Our algorithm takes the maximum between an approximate solution for the subproblem on feasible region ${\cal D}_1$ and an optimal solution on input demands lying in ${\cal D}_2$.  This only reduces the approximation ratio by at most a factor of 2.


\subsection{Approximation Algorithm}
\noindent 
We let ${\sf Alg}^{\rm a} [(d_k,v_k: k \in K), C]$ be our algorithm for {\sc C-KP}, where $(d_k,v_k: k \in K)$ are the complex-valued demands and values of items and $C$ is the capacity. Moreover, we let ${\sf Alg}^{\rm 1d}[(d_k,v_k: k \in K), C]$ be a polynomial-time approximation algorithm for {\sc 1-KP}, where each demand is real-valued.
We describe our algorithm as follows:

\begin{algorithm}[htb!]
\caption{${\sf Alg}^{\rm a} [(d_k,v_k: k \in K), C]$}
\begin{algorithmic}[1]
\FOR{$k \in K$} 
\STATE Set $\tilde{d}_k =  \frac{d_k^{\rm R}+d_k^{\rm I}}{\sqrt{2}} $
\ENDFOR
\STATE Set $S_1 = {\sf Alg}^{\rm 1d}[(\tilde{d}_k,v_k: k \in K),\frac{C}{\sqrt{2}}]$
\STATE Set $S_2 = \{\displaystyle {\arg\max}_{k \in K : d_k \in {\cal D}_2  } v_k \}$
\STATE Set $S = \arg\max_{S_1, S_2}\{ v(S_1), v(S_2) \}$
\STATE Output $S$
\end{algorithmic}
\end{algorithm}

In ${\sf Alg}^{\rm a}$, we first project all demands onto the $\frac{\pi}{4}$ line, 
and use an approximation algorithm ${\sf Alg}^{\rm 1d}$ for {\sc 1-KP} to compute an allocation (denoted by $S_1$) considering the projected demands and capacity $C/\sqrt{2}$.  Then we look at all demands lying in region ${\cal D}_2$ and choose one with maximum value as solution $S_2$.  Note that $S_2$ only consists of a single item.  Finally, we compare the total value of solutions $S_1$ and $S_2$ and pick the larger one.  All ties are broken arbitrarily.

\subsection{Analysis}
\noindent
It is evident that our algorithm outputs a feasible solution in polynomial time.  For the approximation ratio, our main result is:

\begin{theorem} 
\label{thm:2apx}
If ${\sf Alg}^{\rm 1d}$ is a $\rho$-approximation algorithm for {\sc 1-KP}, then ${\sf Alg}^{\rm a}$ is a $\frac{\rho}{2}$-approximation algorithm for {\sc C-KP}.
\end{theorem}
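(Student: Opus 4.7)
The plan is to relate an optimal solution $S^*$ for C-KP to the 1-KP subproblem that ${\sf Alg}^{\rm 1d}$ solves in line~4 on projected demands $\tilde d_k=(d_k^{\rm R}+d_k^{\rm I})/\sqrt 2$ with capacity $C/\sqrt 2$, showing that $S^*$ decomposes into pieces each feasible for this 1-KP, together with at most one ``large'' item from $\calD_2$ that $S_2$ captures. Throughout I will use the elementary bound $x+y\le\sqrt{2(x^2+y^2)}$ for $x,y\ge 0$, which gives $\sum_{k\in T}(d_k^{\rm R}+d_k^{\rm I})\le\sqrt 2\,\bigl|\sum_{k\in T}d_k\bigr|$, together with the fact that since all $d_k$ lie in the first quadrant, $|\sum_{k\in T}d_k|$ is monotone in $T$. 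A set $T\subseteq K$ is feasible for the 1-KP subproblem iff $\sum_{k\in T}(d_k^{\rm R}+d_k^{\rm I})\le C$, equivalently $\sum_{k\in T}d_k\in\calD_1$.

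Let $S^*_A=\{k\in S^*:d_k\in\calD_1\}$ and $S^*_B=\{k\in S^*:d_k\in\calD_2\}$. Two items in $\calD_2$ alone would have combined projection $>2C$, hence combined magnitude $>\sqrt 2\,C>C$, contradicting feasibility, so $|S^*_B|\le 1$. If $S^*_B=\{k^*\}$, applying the elementary bound to $S^*$ yields $\sum_{k\in S^*_A}(d_k^{\rm R}+d_k^{\rm I})\le\sqrt 2\,C-(d_{k^*}^{\rm R}+d_{k^*}^{\rm I})<(\sqrt 2-1)C<C$, so $S^*_A$ is 1-KP-feasible. Then $v(S_1)\ge\rho\,v(S^*_A)=\rho(v(S^*)-v_{k^*})$, while $v(S_2)\ge v_{k^*}$ by the choice in line~5, and averaging gives $v(S)\ge\tfrac12(\rho v(S^*)+(1-\rho)v_{k^*})\ge\tfrac{\rho}{2}v(S^*)$.

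The remaining case $S^*_B=\emptyset$ is where the main work lies: although $S^*$ is C-KP-feasible, its total projection may exceed $C$, so $S^*$ need not be 1-KP-feasible on its own. I would prove the following partition lemma: given nonnegative reals $a_1\ge\cdots\ge a_n$ with each $a_i\le C$ and $\sum a_i\le\sqrt 2\,C$, one can partition $\{1,\dots,n\}$ into two sets each summing to $\le C$. The concrete split is the First-Fit-Decreasing prefix: let $T_1=\{1,\dots,i^*\}$ be the longest prefix with $\sum_{i\in T_1}a_i\le C$, and $T_2$ its complement. One subcase, $a_{i^*+1}\le 2C-\sum a_i$, drops out directly from the defining inequality $\sum_{i\in T_1}a_i>C-a_{i^*+1}$; in the other subcase one gets $a_{i^*+1}>(2-\sqrt 2)C$, so by the sorted order $a_1+a_2>2(2-\sqrt 2)C>C$, forcing $T_1=\{1\}$, after which $\sum_{T_2}a_i=\sum a_i-a_1<\sqrt 2\,C-(2-\sqrt 2)C=(2\sqrt 2-2)C<C$.

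Once the partition lemma is established, applying it with $a_k=d_k^{\rm R}+d_k^{\rm I}$ to $S^*\subseteq\calD_1$ produces $T_1,T_2\subseteq S^*$ each 1-KP-feasible with $v(T_1)+v(T_2)=v(S^*)$, so the larger side has value $\ge v(S^*)/2$ and $v(S_1)\ge\rho\cdot v(S^*)/2$. Combining with the previous case gives $v(S)=\max\{v(S_1),v(S_2)\}\ge(\rho/2)v(S^*)$. The main obstacle is the partition lemma; the rest is bookkeeping with the Cauchy--Schwarz-type bound and the averaging inequality $\max\{x,y\}\ge(x+y)/2$.
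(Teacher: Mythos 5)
Your proposal is correct and follows essentially the same route as the paper: project onto the $\pi/4$ line, observe that an optimal solution contains at most one item from ${\cal D}_2$, handle that item via the $S_2$ candidate, and split the remaining ${\cal D}_1$-items into two 1-KP-feasible halves via a subset-partition lemma (the paper's Lemma~\ref{lem:subsetsumA}, which it proves with an unsorted prefix and an algebraic identity rather than your first-fit-decreasing argument, and which in the one-large-item case extracts a slightly sharper $\rho/(1+\rho)$ bound than your averaging step — though both yield the same overall ratio $\rho/2$). No gaps; the differences are organizational, not substantive.
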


\begin{corollary}
\label{cor:2apx}
Since {\sc 1-KP} has an FPTAS \cite{BKV05KS, KPP10book}, there is a $(\frac{1}{2}-\epsilon)$-approximation algorithm for {\sc C-KP} that runs in polynomial-time in the size of input and $1/\epsilon$, for any $\epsilon>0$.
\end{corollary}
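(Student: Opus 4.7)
The plan is to show $v(S)\ge(\rho/2)v(S^\ast)$ by case-analyzing the structure of the optimal solution $S^\ast$ relative to the triangle $\mathcal{D}_1$ and the circular segment $\mathcal{D}_2$. The starting point is the ``basic idea'' observation that $S^\ast$ contains at most one item with demand in $\mathcal{D}_2$: such a $d_k$ lies strictly above the chord $x+y=C$, so $d_k^{\mathrm R}+d_k^{\mathrm I}>C$; if two such items belonged to $S^\ast$, their partial sum would have magnitude at least $\bigl((d_{k_1}^{\mathrm R}+d_{k_2}^{\mathrm R})+(d_{k_1}^{\mathrm I}+d_{k_2}^{\mathrm I})\bigr)/\sqrt{2}>\sqrt{2}\,C$, and adding further first-quadrant vectors can only enlarge the magnitude, contradicting feasibility. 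Write $S^\ast=S^\ast_1\cup S^\ast_2$ with $S^\ast_i=S^\ast\cap\{k:d_k\in\mathcal{D}_i\}$ accordingly.

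If $S^\ast_2=\{k^\ast\}$, then Cauchy--Schwarz on the first-quadrant vector $D^\ast=\sum_{k\in S^\ast}d_k$ gives $\sum_{k\in S^\ast}(d_k^{\mathrm R}+d_k^{\mathrm I})\le\sqrt{2}\,|D^\ast|\le\sqrt{2}\,C$, and subtracting $d_{k^\ast}^{\mathrm R}+d_{k^\ast}^{\mathrm I}>C$ yields $\sum_{k\in S^\ast_1}\tilde d_k\le(\sqrt{2}-1)C/\sqrt{2}<C/\sqrt{2}$, so $S^\ast_1$ is feasible for the projected 1-KP that ${\sf Alg}^{\rm 1d}$ solves. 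Hence $v(S_1)\ge\rho(v(S^\ast)-v_{k^\ast})$, while $v(S_2)\ge v_{k^\ast}$ because $k^\ast$ is itself a candidate in the $\mathcal{D}_2$ maximization. The elementary inequality $\max(\rho a,b)\ge\rho(a+b)/2$, valid for $\rho\in(0,1]$ and $a,b\ge 0$, applied with $a=v(S^\ast)-v_{k^\ast}$ and $b=v_{k^\ast}$ then gives $\max(v(S_1),v(S_2))\ge(\rho/2)v(S^\ast)$.

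If $S^\ast_2=\emptyset$, every $\tilde d_k\le C/\sqrt{2}$ (since $d_k\in\mathcal{D}_1$) and $T:=\sum_{k\in S^\ast}\tilde d_k\le C$ (again by Cauchy--Schwarz on $D^\ast$). Writing $c=C/\sqrt{2}$, the plan is to partition $S^\ast$ into disjoint $A\cup B=S^\ast$ with $\sum_A\tilde d_k,\sum_B\tilde d_k\le c$, which yields $v(S_1)\ge\rho\max(v(A),v(B))\ge\rho\,v(S^\ast)/2$. If $T\le c$, take $A=S^\ast$. Otherwise let $m=\max_k\tilde d_k$; if $m\ge 2c-T$, isolate the maximizer in $A$, leaving $\sum_B\le T-m\le 2T-2c\le 2(\sqrt{2}-1)c<c$. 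If instead $m<2c-T$, then every projected demand is strictly smaller than the window width $2c-T$, so no prefix sum $P_0=0,\dots,P_n=T$ (in any order) can jump over the interval $(T-c,c]$; taking $i^\ast$ to be the largest index with $P_{i^\ast}\le c$ places $P_{i^\ast}$ in that interval and gives a feasible split.

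The main obstacle is this partition step in the second case: $T$ may exceed $c$, so $S^\ast$ does not directly project to a 1-KP-feasible set, and a naive single-pass greedy can fail when one $\tilde d_k$ close to $c$ leaps past the valid split window. The case split on $m$ versus $2c-T$, combined with the quantitative bounds $T\le\sqrt{2}\,c$ and $\tilde d_k\le c$ inherited from the triangle $\mathcal{D}_1$, is precisely what lets us either quarantine an oversized item or rely on the window width exceeding every jump. Corollary~\ref{cor:2apx} is then immediate: apply the FPTAS for 1-KP with ratio $1-\epsilon'$ as ${\sf Alg}^{\rm 1d}$ and set $\epsilon'=2\epsilon$.
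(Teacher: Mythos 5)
Your proposal is correct and follows essentially the same route as the paper: project demands onto the $\pi/4$ line, observe that an optimal solution contains at most one item with demand in $\mathcal{D}_2$, and split the remaining optimum into two parts each feasible for the projected {\sc 1-KP} with capacity $C/\sqrt{2}$, losing a factor of $2$ on top of $\rho$. The only differences are cosmetic: you organize the cases by the composition of $S^\ast$ rather than by where $d(S^\ast)$ lands (these coincide, since an item in $\mathcal{D}_2$ forces $d(S^\ast)\in\mathcal{D}_2$), and your window/no-jump prefix-sum argument is an equally valid alternative proof of the splitting fact that the paper isolates as Lemma~\ref{lem:subsetsumA}.
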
   

Now we prove Theorem \ref{thm:2apx}.
\begin{proof}
Let $S^*$ be an optimal solution to C-KP, for which the feasible region is $\cal D$.  Let $S_1^*$, $S_2^*$ be an optimal solution for the subproblem on feasible region ${\cal D}_1$ and ${\cal D}_2$ respectively.  By our observation in Subsection \ref{subsec:pic}, $S_1^*$ is an optimal solution to 1-KP on projected demands and capacity $C/\sqrt{2}$.  Since ${\sf Alg}^{\rm 1d}$ is a $\rho$-approximation algorithm to {\sc 1-KP}, we have $v(S_1)\geq \rho \cdot v(S_1^*)$.  It is also evident that $v(S_2^*)=v(S_2)$.

Next, we analyze the approximation ratio of ${\sf Alg}^{\rm a}$ in three cases.  Here for a subset $S \subseteq K$, we define 
\begin{equation*}
d(S) \triangleq \sum_{k\in S} d_k=\sum_{k\in S} d_k^{\rm R}+ {\bf i}\sum_{k\in S} d_k^{\rm I} 
\end{equation*}

\noindent {\bf Case (1): ($\rho$-approximation) } We consider an optimal solution $S^\ast$, such that its sum of demands $d({S^\ast})\in {\cal D}_1$.  

This is an easy case where $v({S^\ast})=v({S_1^\ast})$.  We have $v(S) \ge v(S_1)\ge  \rho \cdot v({S_1^\ast})=\rho \cdot v({S^\ast})$.
\vskip 5pt

\noindent {\bf Case (2):  ($\frac{\rho}{1+\rho}$-approximation) } We consider an optimal solution $S^\ast$, such that $d({S^\ast})\in {\cal D}_2$, and there exists an item $j\in S^\ast$ whose demand $d_j\in {\cal D}_2$.  

Let $z \triangleq \sum_{k\in S^\ast\setminus\{j\}} d_k$. Thus, $d({S^\ast})=d_j+z$, i.e., the sum of demands of $S^\ast$ can be written as the sum of a single demand $d_j$ and a subset sum $z$.\footnote{It is possible that $S^{\ast}$ only consists of a single item $j$, in which case our algorithm obviously produces the optimal answer.}  Note that $d_j\in {\cal D}_2$ and $z\in {\cal D}_1$. Otherwise, the projection of $d({S^\ast})=d_j+z$ on the $\frac{\pi}{4}$ line would exceed $2\cdot C/\sqrt{2}>C$.  

Moreover, we have $v({S^\ast\setminus\{j\}}) \le  v({S_1^\ast})$, because $S_1^\ast$ is an optimal solution for feasible region ${\cal D}_1$.  On the other hand, $v_j \le v({S_2})$ since item $j$ with $d_j\in {\cal D}_2$ is a candidate for $S_2$ in our algorithm.  
We obtain:
\begin{equation*}
v({S^\ast})=v_j+v({S^\ast\setminus\{j\}}) \le  v({S_2})+v({S_1^\ast})
\end{equation*}

By the description of our algorithm, the total value of the output solution $v(S)=\max(v({S_1}), v({S_2}))\geq \max(\rho\cdot v({S_1^*}), v({S_2}))= \max(\rho\cdot v({S_1^*}), v({S_2^*}))$.  Now it remains to show that it is further $\geq \frac{\rho}{1+\rho}(v({S_2})+v({S_1^*}))$.

If $\rho\cdot v({S_1^*})\geq v({S_2})$, we have that $v(S)$ is at least 
\begin{equation*}
\rho\cdot v({S_1^*})= \frac{\rho}{1+\rho}(\rho\cdot v({S_1^*})+v({S_1^*}))
\geq \frac{\rho}{1+\rho}( v({S_2})+v({S_1^*}));
\end{equation*}
otherwise, $v(S)$ is at least  
\begin{equation*}
v({S_2})= \frac{\rho}{1+\rho}(v({S_2})+\frac{1}{\rho}v({S_2}))\geq \frac{\rho}{1+\rho}( v({S_2})+v({S_1^*})).
\end{equation*}


\noindent {\bf Case (3): ($\frac{\rho}{2}$-approximation) } We consider an optimal solution $S^\ast$, such that $d({S^\ast})\in {\cal D}_2$, and $d_k\in {\cal D}_1$ for every item $k\in S^\ast$.

First, we let $\tilde{d}(S) \triangleq \sum_{k\in S} \tilde{d}_k$.  The condition on $S^{\ast}$ is equivalent to the following condition on 
projected demands on the $\frac{\pi}{4}$ line: $C/\sqrt{2}<\tilde{d}({S^\ast}) \leq C$, and $\tilde{d}_k \le  C/\sqrt{2}$ for every item $k\in S^\ast$.

We use Lemma \ref{lem:subsetsumA} to show that $d({S^\ast})\in {\cal D}_2$ can be written as the sum of two demand subset sums in ${\cal D}_1$.  Lemma \ref{lem:subsetsumA} is essentially an equivalent statement of this on the projected demands, and will be proved later in this subsection.     

\begin{lemma} \label{lem:subsetsumA}
For a set of $n$ positive real numbers $a_1, ..., a_n$ satisfying $\sum_{i = 1}^n a_i \le C$,  $a_i \le C'$ for all $i$ and $C'\ge C/\sqrt{2}$, there exists a subset $T \subseteq \{1,...,n\}$ such that
\begin{equation*}
\sum_{i \in T} a_i \le C' \mbox{\ \ and \ \ } \sum_{i \in \{1,...,n\} \backslash T} a_i \le C'.
\end{equation*}
\end{lemma}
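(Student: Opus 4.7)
The plan is to establish the lemma by a short case split after sorting the values in decreasing order, $a_1 \ge a_2 \ge \cdots \ge a_n$. If $C' \ge C$ there is nothing to do (take $T = \{1,\ldots,n\}$), so the interesting regime is $C' < C$, which I would assume henceforth.

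The first case I would dispatch is $a_1 > C - C'$: then taking $T = \{1\}$ already works, because $\sum_{i \in T} a_i = a_1 \le C'$ by hypothesis and $\sum_{i \notin T} a_i \le C - a_1 < C'$. In the complementary case every $a_i \le C - C'$, and the strategy is a greedy prefix: let $T$ consist of the longest prefix $\{1,\ldots,j\}$ of the sorted list whose cumulative sum is $\le C'$. If the whole list fits in $T$ the lemma is immediate; otherwise the failure condition $\sum_{i=1}^{j+1} a_i > C'$ rearranges to $\sum_{i \in T} a_i > C' - a_{j+1} \ge 2C' - C$, which in turn gives
\begin{equation*}
\sum_{i \notin T} a_i = \sum_{i=1}^n a_i - \sum_{i \in T} a_i < C - (2C' - C) = 2(C - C').
\end{equation*}
Finally I would invoke the hypothesis $C' \ge C/\sqrt{2}$, which implies $C' \ge 2C/3$ (since $1/\sqrt{2} > 2/3$, equivalently $9 > 8$), and this rearranges to $2(C - C') \le C'$, closing the argument.

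The main obstacle is not technical but conceptual: finding the case split that exposes exactly where the constant $1/\sqrt{2}$ gets used. If every element were small relative to $C - C'$, the greedy-prefix analysis alone would require only the weaker hypothesis $C' \ge 2C/3$; but if some element falls in the gap $(C - C', C']$ the greedy prefix can fail, and it is the singleton split that rescues this regime. The constant $C/\sqrt{2}$ in the hypothesis is in fact slightly more than what the argument strictly demands, yet it is precisely the bound that arises geometrically when demands in $\calD_1$ are projected onto the $\pi/4$-line, which is why this lemma is exactly the tool wanted for Case~(3) in the proof of Theorem~\ref{thm:2apx}.
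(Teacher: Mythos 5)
Your proof is correct. It reaches the same conclusion by a slightly different decomposition than the paper's: the paper does not sort, but scans the (arbitrary) order for the first prefix whose sum exceeds $C'$, names the pivot element $a_j$ and the three blocks $x=\sum_{i<j}a_i$, $z=a_j$, $y=\sum_{i>j}a_i$, and then cases on whether $y+z\le C'$; in the bad subcase it isolates the singleton $\{j\}$ and bounds the rest via the identity $x+y=2(x+y+z)-(x+z)-(y+z)<2C-2C'\le(2-\sqrt2)C<C/\sqrt2\le C'$. You instead case upfront on whether the maximum element exceeds $C-C'$: if so, the singleton split works immediately; if not, the plain greedy prefix closes on its own without ever isolating a singleton. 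The two arguments use the same crucial slack --- both ultimately need only $2(C-C')\le C'$, i.e.\ $C'\ge 2C/3$, which is weaker than the stated hypothesis $C'\ge C/\sqrt2$ --- and your explicit remark to that effect is accurate and matches what the paper's chain of inequalities implicitly reveals. Your version arguably makes it clearer \emph{where} the constant is consumed (only in the all-elements-small regime), at the minor cost of a sort that the paper's scan avoids; the paper's version is marginally more uniform since a single scan handles both regimes. Either proof is a valid substitute for the other in Case~(3) of Theorem~\ref{thm:2apx}.
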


By Lemma \ref{lem:subsetsumA}, we have $\tilde{d}(T)$ and $\tilde{d}({S^{\ast}\setminus T}) \le  C/\sqrt{2}$ for some subset $T \subseteq S^\ast$. That is, $d(T)  \in {\cal D}_1$ and $d({S^\ast\setminus T}) \in {\cal D}_1$.

Thus, $v(T) \le  v({S_1^\ast})$ and $v({S^\ast\setminus T}) \le  v({S_1^\ast})$.  Moreover, since $v({S^\ast})=v(T)+v({S^\ast\setminus T})$, we have $v({S^\ast}) \le  2v({S_1^\ast})$.  Hence  
\begin{equation*}
v(S) \ge v(S_1)\ge  \rho \cdot v(S_1^\ast)\ge\frac{\rho}{2} v({S^\ast}).
\end{equation*}

Combining Cases (1)-(3): $\min\{\rho, \rho/(1+\rho), \rho/2 \} = \rho/2$, we complete the proof of the approximation ratio of ${\sf Alg}^{\rm a}$ as $\rho/2$.
\end{proof}

Finally, we prove Lemma \ref{lem:subsetsumA}:

\begin{proof} 
The case $\sum_{i=1}^{n} a_i\leq C'$ is trivial.  Otherwise, 
let $j$ be the smallest index such that the partial sum exceeds $C'$, i.e., $\sum_{i=1}^{j-1} a_i \le  C'$ and $\sum_{i=1}^{j} a_i>C'$.  Clearly $j \ge 2$ since all $a_i \le  C'$.

Let $x=\sum_{i=1}^{j-1} a_i$, $z=a_j$ and $y = \sum_{i=j+1}^{n} a_i$.  

Note that $\sum_{i = 1}^{n} a_i = x+y+z$.  We already have 
\begin{equation*}
x \le  C',\ \ z \le  C',\ \ 
x+y+z > C' \mbox{\ \ and\ \ } x+z > C'
\end{equation*}

The lemma holds if $y+z \le  C'$, because we can set $T = \{1 ,..., j-1 \}$.  

If $y+z> C'$, then we obtain:
\begin{eqnarray*}
x+y & = & 2(x+y+z)-(x+z)-(y+z)  \\
& < & 2C-2C'\le  (2-\sqrt{2})C  < \frac{C}{\sqrt{2}}\le C'
\end{eqnarray*}
because $x+y+z \le  C$.
Hence, we can set $T = \{1 ,..., j-1, j+1, ..., n\}$.  
\end{proof}

\section{Monotone Approximation Algorithm for C-KP}\label{sec:monalg}
\noindent
As mentioned in Subsection \ref{subsec:ic}, a monotone polynomial time algorithm for C-KP
implies an incentive compatible polynomial time mechanism.  However, our approximation algorithm ${\sf Alg}^{\rm a}$ presented in Section \ref{sec:algCKP} does not seem to have an easy proof for monotonicity.
In this section, we give a slight modification of ${\sf Alg}^{\rm a}$, for which monotonicity becomes immediate and the approximation ratio is preserved.

\subsection{Basic Idea} \label{subsec:pic}
\noindent
In ${\sf Alg}^{\rm a}$, monotonicity is not guaranteed due to the comparison between $v(S_1)$ and $v(S_2)$, the total value of solution $S_1$ and $S_2$.  Although we assume ${\sf Alg}^{\rm 1d}$ for {\sc 1-KP} is monotone, $v(S_1)$ can fluctuate since $S_1$ is an approximate solution.  Our trick here is to transform each solution candidate for $S_2$, a single item $k$ with demand $d_k\in {\cal D}_2$, to be a solution candidate for $S_1$: an item of the same value whose demand is exactly the capacity limit $C/\sqrt{2}$ for ${\sf Alg}^{\rm 1d}$.  These new items will not combine with each other or with any original items to form new solution candidates for $S_1$.  Then our new algorithm ${\sf Alg}^{\rm b}$ only needs to run ${\sf Alg}^{\rm 1d}$ on the modified set of items to produce a solution for C-KP.

\subsection{Approximation Algorithm} 

\begin{algorithm}[htb!]
\caption{${\sf Alg}^{\rm b} [(d_k,v_k: k \in K), C]$}
\begin{algorithmic}[1]
\FOR{$k \in K$} 
\STATE Set $\tilde{d}_k =  \min\{ \frac{d_k^{\rm R}+d_k^{\rm I}}{\sqrt{2}}, \frac{C}{\sqrt{2}} \}$
\ENDFOR
\STATE Set $S = {\sf Alg}^{\rm 1d}[(\tilde{d}_k,v_k: k \in K),\frac{C}{\sqrt{2}}]$
\STATE Output $S$
\end{algorithmic}
\end{algorithm}

\noindent Recall that we assume every demand $d_k$ lies in ${\cal D}$ ($|d_k|\leq C$).  The preprocessing $\tilde{d}_k =  \min\{ \frac{d_k^{\rm R}+d_k^{\rm I}}{\sqrt{2}}, \frac{C}{\sqrt{2}} \}$ does exactly the transformation mentioned above: For $d_k\in {\cal D}_1$, we simply do the projection onto the $\frac{\pi}{4}$ line; otherwise, $d_k\in {\cal D}_2$, its projection is larger than $C/\sqrt{2}$, and we cut it off to $C/\sqrt{2}$.  Then we run ${\sf Alg}^{\rm 1d}$ on the modified projected demands and outputs the answer.  

The following theorem states that our modification of the algorithm does not change the approximation ratio:
\begin{theorem} 
\label{thm:2apxb}
If ${\sf Alg}^{\rm 1d}$ is a $\rho$-approximation algorithm for {\sc 1-KP}, then ${\sf Alg}^{\rm b}$ is a $\frac{\rho}{2}$-approximation algorithm for {\sc C-KP}.
\end{theorem}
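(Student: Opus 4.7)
The plan is to reduce the analysis back to the three-case split already carried out in the proof of Theorem~\ref{thm:2apx}. The key observation is that the capping step $\tilde d_k = \min\{(d_k^{\rm R}+d_k^{\rm I})/\sqrt{2},\ C/\sqrt{2}\}$ turns the single 1-KP instance passed to ${\sf Alg}^{\rm 1d}$ into one that simultaneously contains, as feasible candidates, both families of solutions that ${\sf Alg}^{\rm a}$ previously considered separately: subsets of items whose original demands lie in $\mathcal{D}_1$, and singletons $\{k\}$ with $d_k\in\mathcal{D}_2$.

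First I would verify that every output $S$ of ${\sf Alg}^{\rm b}$ is C-KP-feasible. If $S$ contains no item $k$ with $d_k\in\mathcal{D}_2$, then 1-KP feasibility $\sum_{k\in S}\tilde d_k\le C/\sqrt{2}$ is just $\sum_{k\in S}(d_k^{\rm R}+d_k^{\rm I})\le C$, placing the original sum in $\mathcal{D}_1\subseteq\mathcal{D}$. If instead $S$ contains some $k$ with $d_k\in\mathcal{D}_2$, then $\tilde d_k=C/\sqrt{2}$ alone saturates the knapsack; since every other item contributes a strictly positive projected demand, we must have $S=\{k\}$, and the preprocessing assumption $|d_k|\le C$ gives feasibility.

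Next I would lower-bound the optimum $\mathsf{OPT}_{1d}$ of the 1-KP instance ${\sf Alg}^{\rm 1d}$ is run on. Let $S_1^{\ast}$ and $S_2^{\ast}$ denote, as in Theorem~\ref{thm:2apx}, an optimum on $\mathcal{D}_1$ and a best single-item solution with demand in $\mathcal{D}_2$. Since no item in $S_1^{\ast}$ can itself lie in $\mathcal{D}_2$ (adding a nonnegative demand to a vector in $\mathcal{D}_2$ keeps the sum out of $\mathcal{D}_1$), the projections for $S_1^{\ast}$ are uncapped and sum to at most $C/\sqrt{2}$, so $S_1^{\ast}$ is 1-KP feasible. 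For any $k$ with $d_k\in\mathcal{D}_2$, the singleton $\{k\}$ has projected demand exactly $C/\sqrt{2}$, so it is 1-KP feasible too. Hence
\begin{equation*}
\mathsf{OPT}_{1d}\ \ge\ \max\{v(S_1^{\ast}),\,v(S_2^{\ast})\}.
\end{equation*}

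Finally, I would reuse the three-case split of Theorem~\ref{thm:2apx} to obtain $v(S^{\ast})\le 2\max\{v(S_1^{\ast}),v(S_2^{\ast})\}$: Case~(1) gives $v(S^{\ast})=v(S_1^{\ast})$; Case~(2) gives $v(S^{\ast})\le v(S_1^{\ast})+v(S_2^{\ast})$ by splitting off the $\mathcal{D}_2$-item; Case~(3) gives $v(S^{\ast})\le 2v(S_1^{\ast})$ via Lemma~\ref{lem:subsetsumA}. Combining with the $\rho$-guarantee of ${\sf Alg}^{\rm 1d}$,
\begin{equation*}
v(S)\ \ge\ \rho\cdot\mathsf{OPT}_{1d}\ \ge\ \rho\,\max\{v(S_1^{\ast}),v(S_2^{\ast})\}\ \ge\ \tfrac{\rho}{2}\,v(S^{\ast}).
\end{equation*}
The only delicate step, which will need a careful sentence, is the feasibility argument when ${\sf Alg}^{\rm 1d}$ picks an item with capped demand $C/\sqrt{2}$ and we need to rule out its combination with other items; the remainder is essentially bookkeeping transplanted from the analysis of ${\sf Alg}^{\rm a}$.
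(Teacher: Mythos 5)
Your proposal is correct and follows essentially the same route as the paper's proof: show that the capped 1-KP instance simultaneously contains the $\mathcal{D}_1$-optimum and every $\mathcal{D}_2$-singleton as feasible candidates (so its optimum dominates $\max\{v(S_1^\ast),v(S_2^\ast)\}$), verify that the output is C-KP feasible because a capped item saturates the knapsack, and then invoke the three-case analysis of Theorem~\ref{thm:2apx} to get $v(S^\ast)\le 2\max\{v(S_1^\ast),v(S_2^\ast)\}$. The paper packages this by partitioning $K$ into $K_1$ and $K_2$ rather than arguing directly on the single run, but the substance is identical.
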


The proof of Theorem \ref{thm:2apxb} is essentially the same as that of Theorem \ref{thm:2apx}.  The main difference is that, now instead of an explicit comparison between the solutions $S_1$ and $S_2$ to the two subproblems on region ${\cal D}_1$ and ${\cal D}_2$ respectively, our algorithm make it implicit inside the execution of ${\sf Alg}^{\rm 1d}$.  Therefore, in the formal proof below, we have to define the two subproblems explicitly and show that the total value of our output $v(S) \ge \rho \cdot \max\{ v(S_1^\ast), v({S_2^\ast})\}$.  

The case analysis is easy given this inequality.  Although ${\sf Alg}^{\rm a}$ has a better approximation guarantee in terms of the inequality $v(S)\geq \max\{ \rho \cdot v(S_1^\ast), v({S_2^\ast})\}$, overall, we achieve the same approximation ratio of $\rho/2$.  Just for case (2), we can only prove an approximation ratio of $\rho/2$, instead of $\rho/(1+\rho)$ for ${\sf Alg}^{\rm a}$.  

\begin{proof}
We partition $K$ into two disjoint sets $K_1$ and $K_2$, such that $K_1 \triangleq \{ k \in K : d_k \in {\cal D}_1 \}$ and $K_2 \triangleq \{ k \in K : d_k \in {\cal D}_2 \}$. Note that the projection of any demand in $K_1$ onto the $\frac{\pi}{4}$ line is at most $C/\sqrt{2}$, whereas that in $K_2$ is larger than $C/\sqrt{2}$.

Let $S_1$ be the output of ${\sf Alg}^{\rm b}$, when the input is $K_1$. Let $S_2$ be the output of ${\sf Alg}^{\rm b}$, when the input is $K_2$.  Let $S_1^\ast$ and $S_2^\ast$ be their corresponding optimal solutions.  $S_1^\ast$ is an optimal solution to {\sc 1-KP} on projected demands within capacity $C/\sqrt{2}$, hence is an optimal solution to {\sc C-KP} on feasible region ${\cal D}_1$.  On the other hand, since each demand in $K_2$ is changed to one exactly equal to the capacity limit of {\sc 1-KP}, only one of them can be satisfied.  Hence $S_2^\ast$ chooses the one with maximum value $S_2^*=\{\arg\max_{k\in K_2} v_k\}$.  

Since any demand in $K_2$ will not combine with any in $K_1$ to form new feasible solutions to {\sc 1-KP}, ${\sf Alg}^{\rm b}$ outputs either a solution whose sum vector lies in ${\cal D}_1$ or a singleton set of a demand in $K_2$, which is evidently a feasible solution to {\sc C-KP}.

Optimally ${\sf Alg}^{\rm 1d}$ would output $\arg\max \{v(S_1^\ast),v(S_2^\ast)\}$.  Since ${\sf Alg}^{\rm 1d}$ is a $\rho$-approximation algorithm to {\sc 1-KP}, we have $v(S) \ge \rho \cdot \max\{ v(S_1^\ast), v({S_2^\ast})\}$. 

Based on this inequality, it is easy to go through the case analysis in the proof of Theorem \ref{thm:2apx} (with slight modifications), hence we omit the rest of the proof here.  
\end{proof}
On the other hand, our new algorithm is monotone according to Definition \ref{def:mon}. 

\begin{theorem} \label{thm:mon}
If ${\sf Alg}^{\rm 1d}$ is a monotone algorithm for {\sc 1-KP}, then ${\sf Alg}^{\rm b}$ is a monotone algorithm for {\sc C-KP}.
\end{theorem}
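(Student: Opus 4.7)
The plan is to reduce monotonicity of ${\sf Alg}^{\rm b}$ to monotonicity of the subroutine ${\sf Alg}^{\rm 1d}$ by exploiting the fact that ${\sf Alg}^{\rm b}$ does essentially nothing beyond a per-agent preprocessing followed by a single call to ${\sf Alg}^{\rm 1d}$. Fix any agent $k$, any profile $t_{-k}$ of the other agents, and any two types $t_k = (d_k, v_k)$ and $t_k' = (d_k', v_k')$ of agent $k$ with $v_k' \ge v_k$ and $d_k' \le d_k$ (meaning $d_k'^{\rm R} \le d_k^{\rm R}$ and $d_k'^{\rm I} \le d_k^{\rm I}$). I will assume $k \in S({\sf Alg}^{\rm b}(t_k, t_{-k}))$ and deduce $k \in S({\sf Alg}^{\rm b}(t_k', t_{-k}))$.

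The first step is to observe that, because the preprocessing line \mbox{$\tilde{d}_j = \min\{(d_j^{\rm R}+d_j^{\rm I})/\sqrt{2},\, C/\sqrt{2}\}$} depends only on agent $j$'s own reported demand, changing agent $k$'s type from $t_k$ to $t_k'$ alters only the pair $(\tilde{d}_k, v_k)$ fed into ${\sf Alg}^{\rm 1d}$, while leaving $(\tilde{d}_j, v_j)$ unchanged for every $j \ne k$. So the two runs of ${\sf Alg}^{\rm 1d}$ differ only in agent $k$'s slot, and $k$'s value coordinate moves from $v_k$ to $v_k' \ge v_k$.

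The second step is to verify that the preprocessed demand is monotone in $d_k$: from $d_k'^{\rm R} \le d_k^{\rm R}$ and $d_k'^{\rm I} \le d_k^{\rm I}$ we obtain $(d_k'^{\rm R}+d_k'^{\rm I})/\sqrt{2} \le (d_k^{\rm R}+d_k^{\rm I})/\sqrt{2}$, and since the function $x \mapsto \min\{x, C/\sqrt{2}\}$ is nondecreasing, $\tilde{d}_k' \le \tilde{d}_k$. Thus in the induced {\sc 1-KP} instance, agent $k$'s weight weakly decreases while his value weakly increases and everyone else's data is untouched, which is exactly the hypothesis for single-parameter monotonicity of ${\sf Alg}^{\rm 1d}$ in the sense of Definition~\ref{def:mon}. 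Applying that monotonicity, $k$ remains in the selected set returned by ${\sf Alg}^{\rm 1d}$, which is precisely the output $S$ of ${\sf Alg}^{\rm b}$.

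There is no real obstacle here; the proof is a verification, and the only place that could trip one up is making sure the componentwise partial order on complex demands maps correctly through the projection-and-capping step (step two above). The argument only uses that the preprocessing map $d_k \mapsto \tilde{d}_k$ is monotone with respect to the partial order on $\mathbb{C}_+$ and the usual order on $\mathbb{R}_{\ge 0}$, and that it is purely local to agent $k$; both properties are immediate from the formula. Combining this with monotonicity of ${\sf Alg}^{\rm 1d}$ yields the theorem.
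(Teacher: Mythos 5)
Your proof is correct and follows essentially the same route as the paper's: reduce to the monotonicity of ${\sf Alg}^{\rm 1d}$ by noting that the preprocessing $\tilde{d}_k = \min\{(d_k^{\rm R}+d_k^{\rm I})/\sqrt{2},\, C/\sqrt{2}\}$ is local to each agent and order-preserving, so a componentwise decrease in $d_k$ yields $\tilde{d}_k' \le \tilde{d}_k$. Your added remarks about locality of the preprocessing are a slightly more explicit version of what the paper leaves implicit, but the argument is the same.
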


\begin{proof}
We need to show that, if item $k$ is selected by ${\sf Alg}^{\rm b}$ with demand $d_k$ and value $v_k$, $k$ is also selected with demand $d'_k$ and value $v'_k$, where $v'_k \ge v_k$ and ${d'}_k\leq d_k$ (i.e., ${d'}_k^{\rm R} \le d_k^{\rm R}$ and ${d'}_k^{\rm I} \le d_k^{\rm I}$), while all inputs of other agents do not change.  

Item $k$ is selected by ${\sf Alg}^{\rm b}$ on $d'_k$ and $v'_k$ if and only if it is selected by ${\sf Alg}^{\rm 1d}$ on $\tilde{d}_k'$ and $v'_k$. 
Since $\tilde{d}_k =  \min\{ \frac{{d}_k^{\rm R}+{d}_k^{\rm I}}{\sqrt{2}}, \frac{C}{\sqrt{2}} \}$, ${d'}_k^{\rm R} \le d_k^{\rm R}$ and ${d'}_k^{\rm I} \le d_k^{\rm I}$ implies $\tilde{d}_k' \le \tilde{d}_k$. Then from the monotonicity of ${\sf Alg}^{\rm 1d}$, $k$ is selected by ${\sf Alg}^{\rm 1d}$, and hence by ${\sf Alg}^{\rm b}$.
\end{proof}

Combining Theorem \ref{thm:2apxb}, Theorem \ref{thm:mon} with Theorem \ref{thm:IC} gives:\footnote{Note that algorithms are exact under this problem formulation where a solution is specified as the selection of a subset of items.} 
\begin{corollary}
\label{cor:2apxb}
Since {\sc 1-KP} has a monotone FPTAS \cite{BKV05KS}, there is an incentive compatible $(\frac{1}{2}-\epsilon)$-approximation algorithm for {\sc C-KP} that runs in polynomial-time in the size of input and $1/\epsilon$, for any $\epsilon>0$.
\end{corollary}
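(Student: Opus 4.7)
The plan is to assemble the corollary by chaining the three preceding results (Theorems~\ref{thm:2apxb}, \ref{thm:mon}, and \ref{thm:IC}) with the existence of a monotone FPTAS for {\sc 1-KP} due to Briest et al.~\cite{BKV05KS}. Concretely, I would instantiate ${\sf Alg}^{\rm 1d}$ inside ${\sf Alg}^{\rm b}$ with the Briest--Krysta--V\"ocking monotone FPTAS: for any fixed $\epsilon>0$, this yields a $\rho$-approximation for {\sc 1-KP} with $\rho = 1-2\epsilon$ (or any desired $1-\epsilon'$, rescaling $\epsilon$), and runs in time polynomial in the input size and $1/\epsilon$.

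Given this choice of ${\sf Alg}^{\rm 1d}$, I would first invoke Theorem~\ref{thm:2apxb}: ${\sf Alg}^{\rm b}$ is then a $\rho/2 = (1/2 - \epsilon)$-approximation for {\sc C-KP}, and its running time remains polynomial in the input size and $1/\epsilon$ since the only overhead beyond ${\sf Alg}^{\rm 1d}$ is the preprocessing step that computes each $\tilde d_k$, which takes linear time. Next, I would invoke Theorem~\ref{thm:mon}: since ${\sf Alg}^{\rm 1d}$ is monotone, so is ${\sf Alg}^{\rm b}$.

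Finally, to convert this monotone approximation algorithm into an incentive compatible mechanism, I would apply Theorem~\ref{thm:IC} to define the critical-value payments $p^{{\sf Alg}^{\rm b}}$. Here I would invoke the footnote observation to justify exactness: under the subset-selection formulation, a selected agent receives exactly the assignment corresponding to $d_k$ (nothing is wasted and nothing is over-delivered), so the monotone algorithm is automatically exact in the sense required by Theorem~\ref{thm:IC}. The payment for each selected agent is the critical value $\theta_k(d_k,t_{-k})$, which, as noted at the end of Subsection~\ref{subsec:ic}, can be computed by binary search on $[0, v_k]$ using $O(\log v_k)$ invocations of ${\sf Alg}^{\rm b}$; since values are integers this adds only a polynomial factor, preserving the overall polynomial running time in the input size and $1/\epsilon$.

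The only subtle point, and the one I would be careful to state explicitly, is the interaction between the two notions of ``exactness'': Theorem~\ref{thm:IC} requires an algorithm that either meets an agent's demand exactly or assigns zero, while ${\sf Alg}^{\rm b}$ is phrased in terms of picking a subset $S\subseteq K$. The footnote attached to the corollary resolves this by noting that the subset formulation is trivially exact, so no further modification of ${\sf Alg}^{\rm b}$ is needed before applying Theorem~\ref{thm:IC}. With that remark in place, the corollary follows directly.
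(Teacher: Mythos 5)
Your proposal is correct and follows essentially the same route as the paper: the corollary is obtained by instantiating ${\sf Alg}^{\rm 1d}$ with the monotone FPTAS of \cite{BKV05KS} and chaining Theorems~\ref{thm:2apxb}, \ref{thm:mon}, and \ref{thm:IC}, with exactness handled exactly as in the paper's footnote (the subset-selection formulation is trivially exact). Your additional remarks on the $\rho = 1-2\epsilon$ rescaling and the polynomial-time computation of the critical-value payments via binary search are consistent with what the paper states in Subsection~\ref{subsec:ic}.
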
 

\section{Inapproximability for C-KP}\label{sec:inappr}
\noindent
In this section, we complete the study of {\sc C-KP} by providing an inapproximability result. We show that {\sc C-KP} does not admit an FPTAS, unless P = NP.

We remark that it is known there is no FPTAS for {\sc 2-KP} (see \cite{KPP10book}), which does not have direct implications for {\sc C-KP}.  However, our proof is an extension of the basic idea in the proof for {\sc 2-KP}. 

As in the reduction for 2-KP, we reduce the {\sc equipartition} problem to {\sc C-KP}:

\begin{definition} ({\sc equipartition} Problem): Given a set of positive integers $\{w_k: k \in K \}$, with $|K|=n$ where $n$ is even, we determine if there is a subset of items $S \subseteq K$ such that
\[
|S| = \frac{n}{2} \mbox{\ and\ } \sum_{k \in S} w_k = \sum_{k \notin S} w_k
\]
\end{definition}
It is well-known that {\sc equipartition} is NP-complete.

\begin{theorem}
There is no FPTAS for {\sc C-KP}, unless P = NP.
\end{theorem}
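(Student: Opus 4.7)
The plan is to reduce {\sc equipartition} to {\sc C-KP} in a way that lets a hypothetical FPTAS, instantiated with $\epsilon=1/n$, solve {\sc equipartition} in polynomial time. Given an instance $\{w_k:k\in K\}$ with $|K|=n$ (even) and $W=\sum_k w_k$, I set, for each $k$,
\[
d_k = \sqrt{n-1}\,w_k + {\bf i}(W-w_k),\qquad v_k = 1,
\]
with capacity $C=\tfrac{W}{2}\sqrt{n(n-1)}$. Both coordinates of each $d_k$ are nonnegative, so the instance is valid. Fixing the real-to-imaginary coefficient ratio to $\sqrt{n-1}$ is the crucial design choice: it will make the $|\,\cdot\,|\le C$ constraint behave like an equality at the target $(x,s)=(W/2,\,n/2)$, where $x=\sum_{k\in S}w_k$ and $s=|S|$.

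The next step is to show that the {\sc C-KP} optimum equals $n/2$ if and only if the {\sc equipartition} instance is a yes-instance. Substituting and completing the square gives $|d(S)|^2 = n(x-sW/n)^2 + (n-1)s^2W^2/n$, so the feasibility constraint simplifies to
\[
\bigl(x - \tfrac{sW}{n}\bigr)^2 \le (n-1)W^2\bigl(\tfrac{1}{4} - \tfrac{s^2}{n^2}\bigr).
\]
Thus $s\le n/2$ always, and at $s=n/2$ the right-hand side vanishes, pinning $x=W/2$; hence a subset of size $n/2$ is feasible exactly when it is an {\sc equipartition} witness. I also need to check that some subset with $s=n/2-1$ is always feasible (so that the no-instance optimum is exactly $n/2-1$, not smaller); this follows from a short calculation showing that the admissible interval for $x$ at $s=n/2-1$ contains $[0,W]$ for all $n\ge 4$, with the small case $n=2$ being trivial.

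Given $\mathrm{OPT}_{\text{YES}}=n/2$ and $\mathrm{OPT}_{\text{NO}}\le n/2-1$, the FPTAS-to-decision step is routine: running the supposed FPTAS with $\epsilon=1/n$ returns value at least $(1-1/n)(n/2)=(n-1)/2$ on a yes-instance and at most $n/2-1<(n-1)/2$ on a no-instance, so the two cases are distinguishable. The running time is polynomial in the input size and in $1/\epsilon=n$, giving a polynomial-time algorithm for {\sc equipartition} and hence $\P=\NP$. The main obstacle is the aspect-ratio choice in the first step: the factor $\sqrt{n-1}$ is obtained by forcing the quadratic in $x$ at $s=n/2$ to have a double root at $x=W/2$, which is what turns the inequality magnitude constraint into a selection device as tight as the pair of linear constraints used in the analogous 2-KP reduction.
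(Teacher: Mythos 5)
Your proposal is correct and follows essentially the same route as the paper: a reduction from {\sc equipartition} with unit values, where the imaginary parts encode cardinality and the real-to-imaginary scaling is tuned so that the magnitude constraint is tangent to the cardinality-$n/2$ level set exactly at the equipartition point, followed by the standard FPTAS-gap argument with $\epsilon=\Theta(1/n)$. The only differences are cosmetic --- you verify tangency by completing the square (and normalize with $W$ rather than $w_{\max}$) where the paper matches the slopes of the tangent line and the cardinality line to solve for its parameter $\beta$.
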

\begin{proof}
We define a decision version of {\sc C-KP} with a cardinality objective: given $\{w_k: k \in K \}$, a capacity bound $C$ and a cardinality bound $M$, we determine if there is a subset of items $S$ such that
\[
|S| \geq M, \mbox{\ and\ } \Big| \sum_{k\in S} d_k \Big| \le C
\]

Now we map every instance of {\sc equipartition} to an instance of the {\sc C-KP} decision problem that always yields the same answer.

Given $\{w_k: k \in K \}$ from {\sc equipartition}, define 
\[
M=n/2,\quad d_k^{\rm R} = w_k, \quad d_k^{\rm I} = \beta(w_{\max} -  w_k), 
\]
\[
C = \sqrt{ \Big( \frac{W}{2} \Big)^2 + \beta^2\Big( \frac{n w_{\max}}{2} -  \frac{W}{2} \Big)^2 }
\]
where $W \triangleq \sum_{k = 1}^{n} w_k$, $w_{\max} \triangleq \max \{ w_k: k \in K\}$.  Note that in our reduction, $d_k^{\rm I} \ge 0$.

As shown in Fig.~\ref{fig:inapprox}, the feasible region $\cal D$ for {\sc C-KP} is the $\frac{1}{4}$ disk of radius $C$ in the first quadrant.  Since for any subset $S\subseteq K$, 
$\sum_{k\in S} d_k^{\rm I}=\beta(|S|\cdot w_{\max}- \sum_{k\in S} d_k^{\rm R})$,
the cardinality constraint $|S|\geq \frac{n}{2}$ imposes all solutions to have its sum vector in the halfplane $H:$ $d^I\geq \beta(\frac{n w_{\max}}{2}- d^R)$.  The dividing line of $H$ goes through point $P:$ $\Big(\frac{W}{2} , \beta(\frac{n w_{\max}}{2} -  \frac{W}{2})\Big)$.  Our main idea is to set $\beta>0$ such that the dividing line of $H$ coincides with the tangent line at $P$.  Thus we make the intersection of $H$ and ${\cal D}$ exactly $P$, which implies    
$|S|=\frac{n}{2}$ and $\sum_{k \in S} w_j=\frac{W}{2}$ for any solution $S$ to our reduced {\sc C-KP} decision problem instance.

\begin{figure}[htb!]
 \centering
 \includegraphics[scale=0.5]{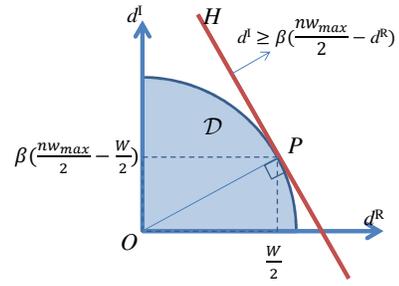} 
 \caption{Reduction of inapproximability.} 
\label{fig:inapprox}
\end{figure}

On the other hand, it is clear that each subset $S$ satisfying conditions of {\sc equipartition} also satisfies conditions of the reduced {\sc C-KP} decision problem.  Therefore, the solution of the reduced {\sc C-KP} decision problem is equivalent to the solution of {\sc equipartition}.

To determine a proper $\beta$, since the dividing line of halfplane $H$ goes through $P$, it coincides with the tangent line at $P$ if and only if they have the same slope, i.e.,
\[
-\frac{\frac{W}{2}}{\beta(\frac{n w_{\max}}{2} - \frac{W}{2})} 
= -{\beta}.
\]
Solving the above equation, we obtain
\[\beta = \sqrt{\frac{W}{n w_{\max}-W}},
\]
which is $>0$ unless all weights are equal.  In this case, we set $\beta=0$, and it is trivially a "yes" instance for both {\sc equipartition} and our C-KP decision problem.

So far we have shown the NP-hardness of the C-KP decision problem.  So its maximization version, where $|S|\geq M$ is replaced by $\max |S|$, is NP-hard.  We use the standard technique to prove the inapproximability of the maximization version by FPTAS. Suppose that there exists an FPTAS for any $\epsilon > 0$ in time polynomial in $n$ and $1/\epsilon$. Then we choose $\epsilon = \frac{1}{n+1}$. Let the optimal solution be $z^\ast>0$ and that of the approximation solution produced by FPTAS be $z^A$. We obtain:
\[
z^A \ge (1-\epsilon) z^\ast > z^\ast - z^\ast/n \ge z^\ast -1
\]
because $z^\ast \le n$. Moreover, since $z^\ast$ is an integer, this implies that the FPTAS can solve the problem exactly in polynomial time, contradicting the NP-hardness of the problem. 

Finally, since the maximization version of C-KP decision problem is a special case of the original C-KP with all $v_k=1$, there is no FPTAS to {\sc C-KP}. 
\end{proof} 
\section{Approximation Algorithm for \\GC-KP}\label{sec:algGCKP}
\noindent
We are also able to solve the generalized problem {\sc GC-KP}, by changing our approximation algorithm ${\sf Alg}^{\rm a}$ in Section \ref{sec:algCKP}. Now, instead of an approximation algorithm ${\sf Alg}^{\rm 1d}$ for {\sc 1-KP} as a subroutine, we rely on an approximation algorithm for {\sc 3-KP} (three-dimensional knapsack problem) as a subroutine. 

\subsection{Basic Idea} \label{subsec:picGen}
\noindent
Now a feasible solution of our problem is a subset of items whose sum of demands lies in the intersection of halfplanes $d^{\rm R}\leq C^{\rm R}$, $d^{\rm I}\leq C^{\rm I}$, and the $1/4$ disk of radius $C$ in the first quadrant.  In the most general case ($C^{\rm R}, C^{\rm I}<C$), both halfplanes cut the circle, which also cut the original regions ${\cal D}_1$ and ${\cal D}_2$ defined in Section \ref{sec:algCKP}.
Fig.~\ref{fig:fig3} shows the new ${\cal D}_1$ (polygon $PSTQO$) and ${\cal D}_2$.  Clearly, the feasible region ${\cal D}$ is the disjoint union of ${\cal D}_1$ and ${\cal D}_2$.  

\begin{figure}[htb!]
 \centering 
 \includegraphics[scale=0.5]{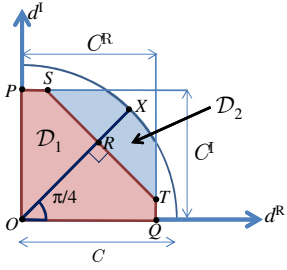} 
 \caption{Graphical Picture for GC-KP.} 
\label{fig:fig3}
\end{figure}

Recall that $OR$ is perpendicular to $ST$ and the length of $OR$ is $C/\sqrt{2}$.  If we denote the projection of a demand $d_k$ onto line $OR$ by $\tilde{d}_k$, the region ${\cal D}_1$ corresponds to 3-dimensional linear constraint $\sum_{k\in K} x_k\tilde{d}_k\leq C/\sqrt{2}$,
$\sum_{k\in K}x_kd_k^{\rm R}\leq C^{\rm R}$ and $\sum_{k\in K}x_kd_k^{\rm I}\leq C^{\rm I}$.  Thus the subproblem on feasible region ${\cal D}_1$ can be solved by a 3-dimensional knapsack algorithm.   

On the other hand, the solutions in polygon ${\cal D}_1$ is almost the whole story by the same reason as in Section \ref{sec:algCKP}.  Again our algorithm takes the maximum between an optimal solution for the subproblem on feasible region ${\cal D}_1$ and an optimal solution on input demands lying in ${\cal D}_2$. 
This reduces the approximation ratio by at most a factor of 2.

The degenerate cases ($C^{\rm R}\geq C$ or $C^{\rm I}\geq C$ or both) can be treated easily by setting $T,Q$ to be the intersection point of the circle and the $d^{\rm R}$-axis, or setting $P,S$ to be the intersection point of the circle and the $d^{\rm I}$-axis, or both.  

\subsection{Approximation Algorithm}
\noindent
Let ${\sf Alg}^{\rm c} [(d_k,v_k: k \in K), C,C^{\rm R},C^{\rm I}]$ be our approximation algorithm for {\sc GC-KP}, where $C,C^{\rm R},C^{\rm I}$ are the capacity on magnitude, real part and imaginary part of total satisfiable demand respectively. Let ${\sf Alg}^{\rm 3d}[((d^1_k, d^2_k, d^3_k),v_k: k \in K), C^1, C^2, C^3]$ be an approximation algorithm for {\sc 3-KP} (e.g., from \cite{FC84alg} or \cite{KPP10book}).   
We describe our approximation algorithm to {\sc GC-KP} as follows:

\begin{algorithm}[htb!]
\caption{${\sf Alg}^{\rm c} [(d_k,v_k: k \in K), C, C^{\rm R}, C^{\rm I}]$}
\begin{algorithmic}[1]
\FOR{$k \in K$} 
\STATE Set $\tilde{d}_k = \frac{d_k^{\rm R}+d_k^{\rm I}}{\sqrt{2}}$
\ENDFOR
\STATE Set $S_1 = {\sf Alg}^{\rm 3d}[((\tilde{d}_k, d_k^{\rm R}, d_k^{\rm I}),v_k: k \in K),\frac{C}{\sqrt{2}}, C^{\rm R}, C^{\rm I}]$
\STATE Set $S_2 = \{\displaystyle {\arg\max}_{k \in K : d_k \in {\cal D}_2  } v_k \}$
\STATE Set $S = \arg\max_{S_1, S_2}\{ v(S_1), v(S_2) \}$
\STATE Output $S$
\end{algorithmic} 
\end{algorithm} 

Our ${\sf Alg}^{\rm c}$ follows the same structure as ${\sf Alg}^{\rm a}$ for C-KP.  The difference is that, for GC-KP, the subproblem on feasible region ${\cal D}_1$ is equivalent to an instance of 3-KP, since ${\cal D}_1$ is defined by three halfplanes.  And to check if a demand $d_k$ lies in region ${\cal D}_2$, we need to check four inequalities: $|d_k|\leq C$, $\tilde{d}_k> C/\sqrt{2}$, $d_k^{\rm R}\leq C^{\rm R}$ and $d_k^{\rm I}\leq C^{\rm I}$. 

\begin{theorem} 
\label{thm:g2apx}
If ${\sf Alg}^{\rm 3d}$ is a $\rho$-approximation algorithm for {\sc 3-KP},  ${\sf Alg}^{\rm c}$ is a $\frac{\rho}{2}$-approximation algorithm for {\sc GC-KP}.
\end{theorem}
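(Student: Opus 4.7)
The plan is to mirror, step-for-step, the proof of Theorem \ref{thm:2apx}, with the central new observation being that the subproblem on the polygonal region $\mathcal{D}_1$ is now an instance of \textsc{3-KP} rather than \textsc{1-KP}. Concretely, for each $S \subseteq K$, let $\tilde{d}(S) = \sum_{k\in S}\tilde{d}_k$, $d^{\rm R}(S) = \sum_{k\in S}d_k^{\rm R}$, and $d^{\rm I}(S) = \sum_{k\in S}d_k^{\rm I}$. Then $d(S)\in\mathcal{D}_1$ iff all three linear inequalities $\tilde{d}(S)\le C/\sqrt{2}$, $d^{\rm R}(S)\le C^{\rm R}$, and $d^{\rm I}(S)\le C^{\rm I}$ hold, so an optimal solution $S_1^\ast$ for the subproblem on $\mathcal{D}_1$ is exactly an optimum for the \textsc{3-KP} instance ${\sf Alg}^{\rm 3d}$ is run on, yielding $v(S_1)\ge \rho\cdot v(S_1^\ast)$. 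As before, $S_2^\ast$ is just the best singleton with demand in $\mathcal{D}_2$, so $v(S_2)=v(S_2^\ast)$.

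Next I would split the analysis into the same three cases as in Theorem \ref{thm:2apx}, according to where the optimal $S^\ast$ falls. Case (1), $d(S^\ast)\in\mathcal{D}_1$, is immediate and gives a $\rho$-approximation. Case (2), where $d(S^\ast)\in\mathcal{D}_2$ and some $j\in S^\ast$ has $d_j\in\mathcal{D}_2$: since any two complex vectors with projection $>C/\sqrt{2}$ would have joint projection $>\sqrt{2}\,C\ge \tilde{d}(S^\ast)$ (which is at most $C$ by Cauchy--Schwarz from $|d(S^\ast)|\le C$), we get $d(S^\ast\setminus\{j\})\in\mathcal{D}_1$; the additional constraints $d^{\rm R}(S^\ast\setminus\{j\})\le C^{\rm R}$ and $d^{\rm I}(S^\ast\setminus\{j\})\le C^{\rm I}$ are inherited from the feasibility of $S^\ast$ and nonnegativity of demands. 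The same algebraic accounting as in Theorem \ref{thm:2apx} then yields a $\rho/(1+\rho)$-approximation.

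Case (3), where $d(S^\ast)\in\mathcal{D}_2$ but every $d_k$, $k\in S^\ast$, lies in $\mathcal{D}_1$, is where I reuse Lemma \ref{lem:subsetsumA} applied to the projected values $\{\tilde{d}_k:k\in S^\ast\}$ with $C'=C/\sqrt{2}$ and the projected capacity $\tilde{d}(S^\ast)\le C$ (again via Cauchy--Schwarz). This produces a subset $T\subseteq S^\ast$ with both $\tilde{d}(T)$ and $\tilde{d}(S^\ast\setminus T)\le C/\sqrt{2}$. The potentially tricky point here, which I expect to be the main subtlety compared to the C-KP proof, is checking that the two remaining constraints on $d^{\rm R}$ and $d^{\rm I}$ are also satisfied by both halves; this is automatic because demands are componentwise nonnegative, so $d^{\rm R}(T), d^{\rm R}(S^\ast\setminus T)\le d^{\rm R}(S^\ast)\le C^{\rm R}$ and similarly for the imaginary part. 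Hence $d(T), d(S^\ast\setminus T)\in\mathcal{D}_1$, so $v(T), v(S^\ast\setminus T)\le v(S_1^\ast)$ and therefore $v(S^\ast)\le 2v(S_1^\ast)$, giving $v(S)\ge v(S_1)\ge\rho\cdot v(S_1^\ast)\ge (\rho/2)\,v(S^\ast)$.

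Combining the three bounds $\min\{\rho,\rho/(1+\rho),\rho/2\}=\rho/2$ (for $\rho\le 1$) completes the proof. The argument is essentially a transplant of Theorem \ref{thm:2apx}; the only genuinely new work is the bookkeeping in Case (3) to ensure the split respects the two additional linear constraints of \textsc{GC-KP}, and the reinterpretation of the $\mathcal{D}_1$ subproblem as \textsc{3-KP}.
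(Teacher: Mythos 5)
Your proposal is correct and matches the paper's intended argument exactly: the paper omits the proof of Theorem~\ref{thm:g2apx}, stating it is essentially the same as that of Theorem~\ref{thm:2apx}, and your transplant --- reinterpreting the $\mathcal{D}_1$ subproblem as {\sc 3-KP}, rerunning the three cases, and observing that the two extra linear constraints are automatically inherited by subsets via componentwise nonnegativity of demands --- is precisely that adaptation. The one piece of bookkeeping you flag as the genuine novelty (the constraint inheritance in Cases (2) and (3)) is indeed the only new content, and you handle it correctly.
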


\begin{corollary}
\label{cor:g2apx}
Since {\sc 3-KP} has a PTAS \cite{FC84alg}, there is a $(\frac{1}{2}-\epsilon)$-approximation algorithm for {\sc GC-KP} that runs in polynomial-time in the size of input, for any $\epsilon>0$.
\end{corollary}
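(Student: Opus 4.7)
The plan is to adapt the proof of Theorem \ref{thm:2apx} to the generalized setting where ${\cal D}_1$ is the polygon $PSTQO$ rather than the triangle of Fig.~\ref{fig:fig1}. The crucial observation is that ${\cal D}_1$ is carved out by exactly the three linear inequalities $\tilde{d}(T) \leq C/\sqrt{2}$, $d^{\rm R}(T) \leq C^{\rm R}$, and $d^{\rm I}(T) \leq C^{\rm I}$, so the subproblem of finding a best feasible subset whose demand sum lies in ${\cal D}_1$ is precisely a 3-KP instance. Invoking ${\sf Alg}^{\rm 3d}$ with these three capacities as in ${\sf Alg}^{\rm c}$ yields $v(S_1) \geq \rho \cdot v(S_1^*)$, where $S_1^*$ denotes an optimal subset restricted to ${\cal D}_1$. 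Also, $v(S_2) = v(S_2^*)$ since every individual item with $d_k \in {\cal D}_2$ is by definition a feasible singleton solution to GC-KP (it satisfies all three constraints).

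Next, I would perform the same three-case analysis on an optimal solution $S^*$ to GC-KP. Case (1): if $d(S^*) \in {\cal D}_1$, then $v(S^*) = v(S_1^*)$ and $v(S) \geq \rho \cdot v(S^*)$. Case (2): if $d(S^*) \in {\cal D}_2$ and some $j \in S^*$ has $d_j \in {\cal D}_2$, then $d(S^* \setminus \{j\}) \in {\cal D}_1$: its projection onto the $\pi/4$ line is at most $C/\sqrt{2}$ (otherwise the total projection, combined with that of $d_j$, would exceed $C$, violating $|d(S^*)| \leq C$), and its real and imaginary parts are bounded by those of the feasible $S^*$. Hence $v(S^*) \leq v(S_1^*) + v(S_2)$, which yields the $\rho/(1+\rho) \geq \rho/2$ bound by the same algebra as in Theorem \ref{thm:2apx}. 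Case (3): if $d(S^*) \in {\cal D}_2$ while every $d_k$ for $k \in S^*$ lies in ${\cal D}_1$, then apply Lemma \ref{lem:subsetsumA} to the projected demands $\{\tilde{d}_k: k \in S^*\}$ to produce a partition $(T, S^* \setminus T)$ with both projected subsums bounded by $C/\sqrt{2}$.

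The main subtlety, compared to Theorem \ref{thm:2apx}, arises in Case (3): I need both $d(T)$ and $d(S^* \setminus T)$ to lie in the polygonal ${\cal D}_1$, not merely to satisfy the projection bound along the $\pi/4$ line. On the face of it, this would require a partitioning lemma simultaneously respecting three linear constraints. Fortunately, since $S^*$ is feasible for GC-KP we already have $d^{\rm R}(S^*) \leq C^{\rm R}$ and $d^{\rm I}(S^*) \leq C^{\rm I}$, and these inequalities descend to every subset of $S^*$ because all coordinates $d_k^{\rm R}, d_k^{\rm I}$ are nonnegative. Thus any partition of $S^*$ automatically satisfies the real and imaginary constraints, and Lemma \ref{lem:subsetsumA} applied to the projections handles the remaining third constraint. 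This yields $v(T), v(S^* \setminus T) \leq v(S_1^*)$, so $v(S^*) \leq 2 v(S_1^*)$ and $v(S) \geq v(S_1) \geq (\rho/2) v(S^*)$. Taking the worst of the three cases gives the claimed $\rho/2$ approximation ratio. The degenerate configurations where $C^{\rm R} \geq C$ or $C^{\rm I} \geq C$ reduce to the same argument by collapsing the appropriate boundary vertices of ${\cal D}_1$ as described in Subsection \ref{subsec:picGen}.
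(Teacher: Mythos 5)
Your proposal is correct and follows exactly the route the paper intends: it is the case analysis of Theorem \ref{thm:2apx} transplanted to the polygonal ${\cal D}_1$, which is precisely the proof the paper omits as ``essentially the same.'' You also correctly identify and resolve the one subtlety the paper glosses over, namely that in Case (3) the two extra linear constraints descend automatically to subsets of $S^*$ by nonnegativity of the demand coordinates, so Lemma \ref{lem:subsetsumA} need only handle the projection constraint.
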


We omit the proof of Theorem \ref{thm:g2apx} here since it is essentially the same as that of Theorem \ref{thm:2apx} for C-KP.  

\section{Conclusions and Future Work}\label{sec:conc}
\noindent
The knapsack problem has been one of the most popular algorithmic problems since it is a simple abstraction that captures the tradeoff between limited resource and value maximization in resource allocation.  
In this paper, motivated by the need to model AC electrical systems, where power demands have to be represented as complex numbers, we initiate the study of a new variation called the {\em complex-demand knapsack problem} (C-KP).  

By investigating its relationship with multi-dimensional knapsack problems ($m$-KP), we provide $(\frac{1}{2}-\epsilon)$-approximation algorithms for C-KP and its generalization GC-KP; on the other hand, we also show its inapproximability by FPTAS unless P = NP.  Furthermore, our approximation algorithm for C-KP can be monotonized, which implies the existence of a mechanism/algorithm of the same approximation ratio that is incentive compatible, individually rational, and computationally efficient.                

Our results provide basic insights on efficient power allocation in AC electrical systems, which is a fundamental problem in the design of multi-agent systems for smart grid.
Still, there are interesting directions to continue in the future: First, we hope to find a PTAS for C-KP, closing the gap between constant approximation and FPTAS, and a monotone algorithm for GC-KP.  Second, we will extend the problem to an electrical network setting, where there is an underlying network connecting different devices with links of different capacities on the magnitude of transmitted power.  For mechanism design, we may require an additional property called {\em cancellability}: the total payment to be collected from the agents should always cover the cost to generate the power supply, given a cost function of power generation.  We are not aware of any previous work related to this property in mechanism design, and we expect new insights and techniques coming out of the study on it.       

\medskip
\noindent {\bf Acknowledgments\ } 
We thank Mario Szegedy and anonymous referees for useful suggestions.  Lan Yu is supported by Singapore NRF Research Fellowship 2009-08.  Chi-Kin Chau is supported by MI-MIT Collaborative Research Project (11CAMA1).
\bibliographystyle{abbrv}
\bibliography{reference}



\end{document}